\newtheorem{Theorem}{Theorem}[section]
\newtheorem{Lemma}[Theorem]{Lemma}
\newtheorem{Proposition}[Theorem]{Proposition}
{\theoremstyle{definition}

\newtheorem{Example}[Theorem]{Example}

\newtheorem{Remark}[Theorem]{Remark}
}
\renewcommand{\P}{\mathbb P}
\newcommand{\C}{\mathbb C}
\newcommand{\R}{\mathbb R}
\renewcommand{\H}{\mathbb H}
\newcommand{\D}{\mathbb D}
\newcommand{\qi}{\mathbf i}
\newcommand{\qj}{\mathbf j}
\newcommand{\qk}{\mathbf k}
\newcommand{\ci}{\mathrm i}
\newcommand{\eps}{\epsilon}
\newcommand{\SE}{\mathrm{SE3}}
\title{A Survey on the Theory of Bonds}
\author{Zijia Li\thanks{This research was supported by the Austrian Science Fund (FWF): P26607-N25, and by
	the Austrian Ministry for Transport, Innovation and Technology (BMVIT) within the framework of the sponsorship
agreement formed for 2015-2018 under the project RedRobCo.} , Joanneum Research, Klagenfurt, Austria \\
	Josef Schicho$\thanks{This research was supported by the Austrian Science Fund (FWF): P26607-N25.}$, RISC, University of Linz, Austria \\
	Hans-Peter Schr\"ocker$^\dagger$, University of Innsbruck, Austria}
\begin{document}

\maketitle

\begin{abstract}
  Many researchers tried to understand/explain the geometric reasons for paradoxical mobility of
  a mechanical linkage, i.e. the situation when a linkage allows more motions than expected
  from counting parameters and constraints. Bond theory is a method that aims at understanding
  paradoxical mobility from an algebraic point of view. Here we give a self-contained
  introduction of this theory and discuss its results on closed linkages with revolute or
  prismatic joints.
\end{abstract}

\section*{Introduction}

By definition, the mobility of a mechanical linkage is the dimension of its configuration space.
We say that a mechanical linkage moves paradoxically if the mobility is positive, but
one does not expect this by counting parameters and constraints. We are especially interested
in the case when the expected mobility is zero, but the linkage is still mobile.
Examples are closed linkages
with 6 revolute joints: for a generic choice of parameters, the closure equations have 16
complex solutions. But there are many families of special cases with mobility~1, such as
Hooke's linkage~\cite{Hooke}, Bricard's line symmetric linkage\cite{bricard}, 
or Wohlhart's partially symmetric linkage~\cite{partsym}.

The theory of bonds was introduced in \cite{hss2} as a tool for systematically explaining
and analyzing paradoxical mobility of closed loops with only revolute joints. That paper
contains a simplified proof of Karger's classification of mobile closed 5R linkages (Karger's
original proof \cite{karger} uses computer algebra). In \cite{hlss}, the theory is used
to prove that the genus of the configuration curve of a mobile 6R linkage is at most 5, and
to classify all cases where the maximum is attained. In \cite{ls}, the theory is used to
obtain equations in the Denavit/Hartenberg parameters of a 6R linkage that are necessary
for mobility. The paper \cite{als} introduces bonds for prismatic joints and the paper
\cite{nawratil} introduces bonds for Stewart platforms. The theory can also be used
if the mobility is bigger than one; however, in this paper we will focus on mobility one linkages.

The main purpose of this paper is to make the theory more accessible, because we think that there
is still potential to derive new results on paradoxically moving linkages. The paper is
therefore a survey on bond theory with a tutorial ambition.

In Section~1, we recall
a well-known isomorphism between the Euclidean group $\SE$ of direct isometries from $\R^3$ into itself
and the quotient group of dual quaternions with nonzero real norm by the subgroup of nonzero
real scalars. We use the language of dual quaternions to formulate configuration spaces
of linkages and the closure equations. In Section~3, we recall a well-known description of
all paradoxically moving closed $n$R-loops with revolute joints for $n=3,4,5$; 
the classification of paradoxically moving closed 6R-loops is an open problem which will be
the main question addressed in the subsequent sections. Bonds for R- and P-joints are introduced in Section~3;
this section also contains properties that translate into geometric conditions on the
Denavit/Hartenberg parameters of the linkage under consideration and a proposition illustrating
the use of bond theory for showing non-trivial (but known)
geometric conditions for paradoxically moving loops of type PRRRR. Section~4 introduces
the bond diagram of a linkage, which is useful for ``reading off'' the degree of various 
coupler motions. The last section summarizes the current knowledge on paradoxically moving
6R-loops and their bond diagrams and points out open questions. It also contains the single
new result of this paper (Example~\ref{ex:333}): by specializing a line symmetric linkage,
one may obtain a linkage with three additional rotations in its configuration set.

This article has been accepted for publication in the IMA Journal of Mathematical
Control and Information published by Oxford University Press.

\section{Dual Quaternions and the Closure Equations}

The algebra $\D\H$ of dual quaternions is defined as the 8-dimensional vector space over $\R$
with basis $(1,\qi,\qj,\qk,\eps,\eps\qi,\eps\qj,\eps\qk)$. The multiplication is defined in the
usual way for quaternions; the symbol $\eps$ is supposed to commute with all quaternions and
to fulfill the equation $\eps^2=0$.
Every dual quaternion $h$ can be written as $h=p+\eps q$ with quaternions $p,q$ called
the primal and dual part of $h$. Alternatively, we may write $h$ as an expression
$h=a_0+a_1\qi+a_2\qj+a_3\qk$ with coefficients $a_0,a_1,a_2,a_3\in\D:=\R\oplus\eps\R$
in the ring of dual numbers. The conjugate of $h$ is defined as $\overline{h}:=a_0-a_1\qi-a_2\qj-a_3\qk$;
conjugation is an anti-automorphism of $\D\H$. 


The norm of the dual quaternion is defined as $N(h):=h\overline{h}$. Norm is a 
homomorphism of the semigroup $(\D\H,\cdot)$ to the semigroup $(\D,\cdot)$. 
Let ${\mathbb S}:=\{ h\in\D\H \mid N(h)\in\R \}$ and ${\mathbb S}^\ast:=\{h\in\D\H \mid N(h)\in\R^\ast\}$,
where $\R^\ast:=\R \setminus\{0\}$. Then ${\mathbb S}^\ast$ is a group and $\R^\ast$ is
a normal subgroup. The quotient group ${\mathbb S}^\ast/\R^\ast$ is isomorphic to the
group $\SE$, see \cite[Section~9.3]{selig05}; we may consider it as a locally closed subset $G$ in $\P^7$. Its closure
is the Study quadric $S$, represented by all dual quaternions in ${\mathbb S}$.
We also introduce the \emph{null cone $Y$} defined by the quadratic
form $p+\eps q \mapsto p\overline{p}$.
For instance, if $t\in\R$, then $[t-\qi]$ corresponds to a rotation around the first axis by 
an angle $2\mathrm{arccot}(t)$, and $[1-\eps\qi t]$ corresponds to a translation by 
a distance $2t$ in the direction of the first axis.

\begin{Remark} \label{rem:inf}
In order to parametrize the full rotation group around a fixed axis, we choose the parameter $t$ in
$\P^1_\R = \R\cup \{\infty\}$. For any dual quaternion $h$, the element $(\infty-h)$ is not finite but
we still can consistently say that the class $[\infty-h]$ is equal to $[1]$. The corresponding group element
is the identity.
\end{Remark}

A {\em linkage} is a collection of rigid bodies, called {\em links}, where two links may be 
connected by a {\em joint}. A joint restricts the relative position of the joined links. 
We consider two types of joints:
\begin{enumerate}
\item (R) revolute joints: allow rotations around a fixed axis;
\item (P) prismatic joints: allow translations in a fixed direction;
\end{enumerate}

The {\em link graph} of a linkage is defined by putting a vertex for each link, and an edge
whenever two links are joined by a joint. In order to specify the linkage completely, it suffices
to specify the allowed subset of $\SE$ for each joint. 

Relative positions can be composed by the group operation: the relative position of link~1 with
respect to link~2 times the relative position of link~2 with respect to link~3 is equal to the
relative position of link~1 with respect to link~3. By multiplying relative positions in
a cycle in the link graph, we get the {\em closure equations}. The solution set of the
closure equations is the {\em configuration set} of the linkage.

\begin{Example} \label{ex:6rhom}
Let $n\ge 3$ be an integer.
The link graph of a closed $n$R linkage is an $n$-cycle. For $r=1,\dots,n$, the set of allowed relative
position can be written as $\{ [t_r-h_r] \mid t_r\in\R\cup\{\infty\} \} = \{ [t_r-h_r] \mid t_r\in\R \} \cup \{[1]\}$, where $h_r$ is a dual
quaternion such that $h_r^2=-1$ specifying the rotation axis in an initial position. 
This gives the closure equation
\begin{equation}\label{eq:ch}
 [(t_1-h_1)(t_2-h_2)\dots(t_n-h_n) ] = [1] . 
\end{equation}
The class on the left hand side is $[1]$ if and only if 7 of the 8 coordinates of the product
are zero, hence we have 7 polynomial equations in $t_1,\dots,t_n$. Actually, one of the 7 equations
is redundant, because it is clear that the product is contained in the Study quadric $S$; and
if $h=a_0+a_1\qi+a_2\qj+a_3\qk$, $a_0,\dots,a_3\in\D$, is a dual quaternion with norm in $\R^\ast$ 
such that $a_1=a_2=a_3=0$, then it follows that $h\in\R^\ast$.

Assume $n=6$.
For generic choice of $h_1,\dots,h_6\in\D\H$ such that $h_r^2=-1$ for $r=1,\dots,6$, one gets
at most 16 real solutions for these 6 equations in $t_1,\dots,t_6$, including the solution
$t_r=\infty$ for $r=1,\dots,6$. The number of complex solutions
is infinite, but if one excludes solutions contained in the null cone $Y$,
then one generically gets 16 complex solutions \cite[pp.~262--264]{selig05}.
\end{Example}

\begin{Example} \label{ex:6rinhom}
In the case of a closed $n$R linkage (or more general simply closed linkages), it is possible
to use a more invariant specification which does not depend on the choice of an initial position.
In essence, this is the method of Denavit/Hartenberg~\cite{Denavit}.
For $r=1,\dots,n$, let $\phi_r$ be the angle between the $r$-th and $(r+1)$-th rotation axis,
with indices modulo $n$; let $d_r$ be the normal distance between these axes; let $s_r$ be the
signed distance of the intersections of the common normals of neighboring axes on the $r$-th
rotation axis. For each linkage, we introduce an internal frame of reference in which the first
joint is the first coordinate axes, the intersection with the common normal with the previous
axes is the origin, and the second axis lies in the first coordinate plane. When we express relative
positions in this frame of reference, then the allowed positions are composed by a translation
in direction of the first axes by $s_r$, rotation around first axis by $\phi_r$, translation
in direction of the second axis by $d_r$, and rotation around the second axis by an arbitrary angle.
So the closure equation is
\begin{equation}\label{eq:cinh}
  [(t_1-\qi)g_1(t_2-\qi)g_2\cdots(t_n-\qi)g_n] = [1],
\end{equation}
where
\begin{equation}\label{gi:1}
g_r=\left(1-\frac{s_r}{2}\eps\qi\right)\left(w_r-\qk\right)\left(1-\frac{d_r}{2}\eps\qk\right)
\end{equation}
and $w_r=\cot(\frac{\phi_r}{2})$ for $r=1,\dots,n$. If $\phi_r$ is a multiple of $\pi$ for some $r$,
then we set $w_r=\infty$ and 
$g_r=\left(1-\frac{s_r}{2}\eps\qi\right)\left(1-\frac{d_r}{2}\eps\qk\right)$.

Similar as in the previous example, we get 6 equations in $n$ parameters, and if $n=6$, 
then for generic choice
of the parameters $s_1,w_1,d_1,\dots,s_6,w_6,d_6$ we get 16 isolated complex solutions.
(But now there is no trivial solution at infinity.)
\end{Example}

\begin{Remark}
The invariant parameters for a closed $n$R linkage above do depend on a choice of orientation
of the rotation axis. Generically, there are $2^n$ choices leading to different parameters.
If we change the orientation of the $k$-th axis, then $w_k$ gets replaced by $-1/w_k$ and $s_k$
gets replaced by $-s_k$, and all other parameters stay the same. Also, the parameter $d_k$ can be
replaced by $-d_k$ without changing the linkage, one just needs to reparametrize the
sets of allowed positions (replacing $t_k$ by $-1/t_k$).

If $w_k=0$ for some $k$, then one can add a constant to $s_k$ and $s_{k-1}$ without changing
the linkage, because then the axes are parallel and the common normal is not unique. Similarly,
if $w_k=\infty$, then one can add a constant to $s_k$ and subtract the constant from $s_{k-1}$.
\end{Remark}

\begin{figure}
  \centering
  \includegraphics{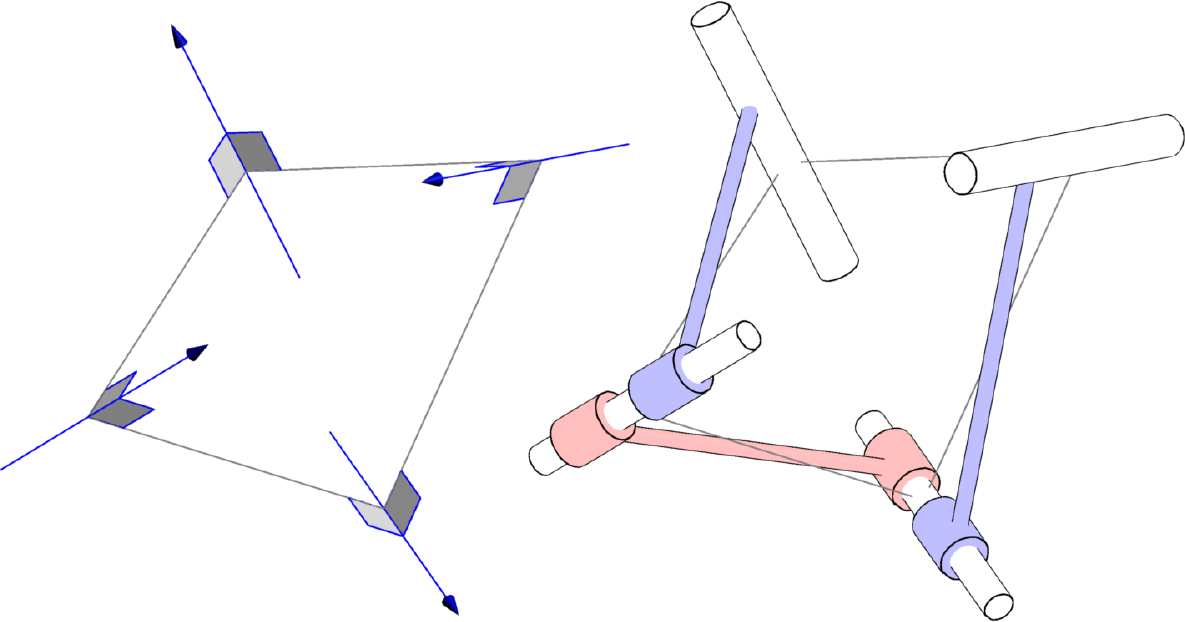}
  \caption{Bennett linkage}
  \label{fig:bennett}
\end{figure}

\begin{Example} \label{ex:bennett}
With the notation from Example~\ref{ex:6rinhom}, we set $n=4$ and
\[ (w_1,w_2,w_3,w_4) = (1,2,1,2) , \]
\[ (d_1,d_2,d_3,d_4) = (4,5,4,5) , \]
\[ (s_1,s_2,s_3,s_4) = (0,0,0,0) . \]
The closure equation~\eqref{eq:cinh} for $(t_1,t_2,t_3,t_4)$ can be simplified using computer algebra (we used Maple).
The simplified system is 
\[ 3t_1t_2+1 = t_1+t_3 = t_2+t_4 = 0 , t_1^2+1\ne 0, t_2^2+1\ne 0. \]
Its solution set is a curve that can be easily parametrized; it is
\[ (t_1,t_2,t_3,t_4) = \left( t,\frac{-1}{3t}, -t, \frac{1}{3t} \right) . \]
This linkage is therefore mobile. It is an example of a Bennett linkage \cite{bennett14} (Figure~\ref{fig:bennett}). 
The general description of Bennett linkages is given by the equations
\begin{equation}
  \label{eq:1}
  \begin{gathered}
s_1=s_2=s_3=s_4=0,\ w_1=w_3\ne 0,\infty,\ w_2=w_4\ne 0,\infty\\
d_1=d_3\ne 0,\ d_2=d_4\ne 0,\ 
 \frac{2d_1w_1}{w_1^2+1}=\frac{2d_2w_2}{w_2^2+1} ,
  \end{gathered}
\end{equation}
which lead to a similar one-dimensional solution set.
\end{Example}

\section{Mobile Closed 4R and 5R Linkages}

For $n=3,4,5$, and for generic choice of parameters, the closure equation~\eqref{eq:ch} has 
only the trivial solution, and the closure equation~\eqref{eq:cinh} has no solution at all
(disregarding the complex solutions with factors in the null cone $Y$). Nevertheless, there are
special cases of mobile linkages of these types. Our main question is what are the implications
of the assumption of mobility for the structure of a linkage. The hope is to find enough conditions
that allow a classification of mobile loops with 4,5 (and later 6) links.

\begin{Example} \label{ex:ca}
If $h_1=h_s$ for some $s, 2\le s\le n$, then the closure equation~\eqref{eq:ch} has the solution
\[ t_s=-t_1, t_r=\infty \mbox{ for } r\ne 1,s. \]
\end{Example}
These are configurations where two of the axes coincide, and one part of the linkage just
rotates about the coinciding axes.

It is easy to show that every mobile 3R linkage is of this type. In case all 3 axes coincide,
the mobility is 2.

\begin{Example} \label{ex:sph}
Assume that $n=4$. Assume that the dual parts of $h_1,\dots,h_4$ are all zero. Then the
dual part of the left hand side of equation~\eqref{eq:ch} is automatically zero and the
closure equation boils down to three polynomial equations in $t_1,t_2,t_3,t_4$. In general,
the solution has complex dimension~1. 

Geometrically, the vanishing of the dual part means that all four axes pass through the origin.
This type of linkage is known as {\em spherical 4R linkage}. It can also be
characterized by the conditions
\[ s_1=s_2=s_3=s_4=d_1=d_2=d_3=d_4=0 . \]

A similar case (which may actually be considered as limiting case of a spherical linkage) is the
{\em planar 4R linkage}, where all 4 axes are parallel.
\end{Example}

It is well-known \cite{delassus} that every mobile 4R linkage either has two coinciding axes
or is spherical, planar, or Bennett. 

\begin{Example} \label{ex:frozen}
Let $n=4$. Let $h_1,h_2,h_3,h_4$ be the dual quaternions defining the rotation axes of
a mobile 4R linkage (e.g. spherical). Choose an arbitrary dual quaternion $h_5$ such that
$h_5^2=-1$. Then any configuration of the mobile 4R linkage can be extended to a configuration
of the 5R linkage by setting $t_5:=\infty$. Hence the 5R linkage is again movable.

The 5th joint in this linkage remains frozen during this particular motion.
\end{Example}

\begin{figure}[ht]
\begin{center}
\includegraphics{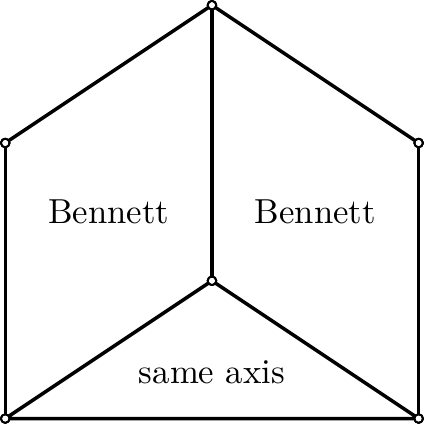}
\end{center}
\caption{The link graph of a triply closed linkage with 6 links and 8 joints (three joints with
coinciding axes) of mobility 1. If the link in the center is removed,
	one obtains a Goldberg 5R linkage.}
\label{fig:gblg}
\end{figure}

\begin{Example} \label{ex:goldberg}
Consider the triply closed linkage with 6 links that are connected according to the link graph
in Figure~\ref{fig:gblg}. Its mobility is one: the motion of the joint corresponding 
to the vertical edge in the middle determines the motion of the two Bennett 4R linkages, and then
the motion of the 3R linkage below is also determined. If we remove the link in the center,
then we get a mobile closed 5R linkage. This construction is due to Goldberg \cite{goldberg}.
\end{Example}

It is well-known that every mobile closed 5R linkage either has coinciding axes, or a frozen joint,
or is planar, spherical, or a Goldberg linkage. The original proof \cite{karger} uses
computer algebra; a simpler proof without computers is based on bond theory \cite{hss2}.

Let ${\cal L}_n:=\R^{2n}\times (\P^1)^n$ denote the parameter space of closed $n$R-linkages. 
Let $D_n:={\cal L}_n\times (\P^1_\C)^n$ be the Zariski closure of the set of all solutions 
of equation~\eqref{eq:cinh}. The projection $D_n\to{\cal L}_n$ is a proper morphism, 
hence the dimension of its fiber is upper semicontinuous in the Zariski topology 
as a function in ${\cal L}_n$. The subset ${\cal M}_n\subset {\cal L}_n$ of all parameters 
of mobile linkages is then also Zariski closed, i.e., it is a subset defined by algebraic
equations in the parameters $s_1,w_1,d_1,\dots,s_n,w_n,d_n$. The linkages with parameters in ${\cal M}_n$
have an infinite solution set over the complex numbers. It is possible that none of these
solutions are real, for instance for a planar linkage where $d_1>d_2+d_3+d_4$. 

For $n=3,4,5$, these equations are known (for $n=3,4$, we essentially gave them above; for $n=5$, see \cite{dietmaier}). 
The dimension of ${\cal M}_n$
is determined by the largest components, corresponding to linkages with coinciding/parallel axes
($\dim({\cal M}_3)=3$, $\dim({\cal M}_4)=7$, $\dim({\cal M}_5)=10$); but the more interesting components have
smaller dimension (Bennett linkages form a 3-dimensional component of ${\cal M}_4$, and
Goldberg linkages form a 5-dimensional component of ${\cal M}_5$). For ${\cal M}_6$,
we do know the dimension: it is 14, again because of components with coinciding axes. 
Several other components are known and will be discussed in the following. Even more
algebraic subsets of ${\cal M}_6$ are known which are not contained in any known component,
but for which it is not clear whether they form a component or they are properly contained
in some yet unknown component. It is an open problem
to determine all components and to give equations for them. A broad discussion of partial
results can be found in \cite{li}.

\section{Definition and First Properties of Bonds}

Assume that we have a linkage with $e$ joints. If the $k$-th joint is of type R, then the
set of allowed motions can be parametrized by $t_k\mapsto m_k:=(t_k-h_k)g_k$ for some $h_k,g_k\in\D\H$
of norm 1 with $h_k^2=-1$, with $t_k\in\R$. If we pass to classes, we may even allow $t_k\in\P^1_\R$,
see Remark~\ref{rem:inf}.
The parametric dual quaternion $m_k$ also appears as a factor in closure
equations. Note that $N(m_k)=t_k^2+1$.

If the $k$-th joint is of type P, then the set of allowed motions can be parametrized by
$t_k\mapsto m_k:=(t_k-\eps p_k)g_k$ for some purely vectorial $p_k\in\H$ (that is, $p_k + \overline{p_k} = 0$) and $g_k\in \D\H$, both of norm 1, with
$t_k\in\R\setminus \{0\}$. Here, we have $N(m_k)=t_k^2$.

Recall that the configuration set $K$ is the set of all $(t_1,\dots,t_e)$ such that
$m_{i_1}\dots m_{i_r}\in\R^\ast$ for all loops with edges $i_1,\dots,i_r$. In order to define
bonds, we have to allow also complex parameters $t_1,\dots,t_e$. Recall that the Zariski
closure of any set $X$ is defined as the set of all points, maybe with complex coordinates,
which satisfy all polynomial equations that are satisfied by all points in $X$.
The Zariski closure of $K$ in the product of complex projective lines 
is denoted by $\overline{K}$. The set $B$ of {\em bonds} is defined as the set of
all elements $(t_1,\dots,t_e)\in\overline{K}$ such that at least one of the $m_k(t_k)$ has
norm zero.
For a fixed bond, the subset of joints $k$ such that $N(m_k(t_k))=0$ are called the joints
attached to the bond. The bond also induces a partition of all links into the connected components
of the subgraph which is obtained by deleting all attached joints. 

The following proposition guarantees that the set of bonds of a mobile linkage
is non-empty.

\begin{Proposition} \label{thm:frozen}
A joint is frozen in a linkage if and only if it is not attached to any bond.
\end{Proposition}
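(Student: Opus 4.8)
The plan is to fix a joint, say the $k$-th, and reduce the whole statement to the behaviour of the coordinate projection $\pi_k\colon\overline{K}\to\P^1_\C$ onto the $k$-th factor of the product of complex projective lines. I read ``the $k$-th joint is frozen (along a given motion)'' as saying that $\pi_k$ is constant along that motion, i.e.\ along an irreducible component $C$ of $\overline{K}$; the global statement then follows by ranging over all components. The entire argument rests on the norm formulas recorded above, homogenized: writing $t_k=[u_k:v_k]$, the expressions $N(m_k)=t_k^2+1$ and $N(m_k)=t_k^2$ become the degree-two forms $u_k^2+v_k^2$ (R-joint) and $u_k^2$ (P-joint), so that the locus $N(m_k)=0$ consists of the two complex points $t_k=\pm\ci$ in the R-case and of $t_k=0$ in the P-case. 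In particular this zero locus avoids every real value of $t_k$ (and $t_k=\infty$).

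For the implication ``frozen $\Rightarrow$ not attached to any bond'', I would suppose $\pi_k\equiv c$ on $C$. Since $C$ carries the real configurations of the motion, the constant $c$ is the common $t_k$-value of real points and hence lies in $\R\cup\{\infty\}$. Then $N(m_k)$ equals the nonzero number $c^2+1$ (R-case) or $c^2$ (P-case, where the admissible values exclude $0$), and $1$ if $c=\infty$. Thus $N(m_k)$ never vanishes on $C$, so no point of $C$ is a bond to which the $k$-th joint is attached. Equivalently, a bond would force $c=\pm\ci$, which a real frozen value forbids.

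For the converse I argue by contraposition: if the $k$-th joint is not frozen, then $\pi_k$ is non-constant on some irreducible component $C$ of $\overline{K}$. Here the decisive structural fact is that $\overline{K}$, being Zariski-closed in a product of projective lines, is projective, so $C$ is an irreducible projective variety and $\pi_k|_C\colon C\to\P^1_\C$ is a proper morphism. A non-constant morphism from an irreducible projective variety to $\P^1_\C$ has closed, irreducible image of positive dimension, hence image equal to all of $\P^1_\C$; therefore $\pi_k|_C$ is surjective. In particular $\pi_k$ attains the value $\ci$ (R-case) or $0$ (P-case) at some point $b\in C\subseteq\overline{K}$, and there $N(m_k(b))=0$ by the norm formula. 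Hence $b$ is a bond to which the $k$-th joint is attached, as required.

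The crux is the surjectivity step: it is false in general that a non-constant map takes every value — a non-constant morphism $\mathbb{A}^1\to\mathbb{A}^1$ may omit a point — and what saves the argument is exactly the properness coming from the projective embedding of $\overline{K}$, the same feature that makes the projection $D_n\to\mathcal{L}_n$ proper and its fibre dimension upper semicontinuous. Two minor points deserve care. First, reducibility: freezing and bonds must be read component by component, since a joint may move in one motion while staying fixed in another, so the equivalence is proved on each $C$ and then assembled. Second, the reality of the frozen value $c$, which is what excludes $c=\pm\ci$; this is guaranteed because a motion consists of real configurations, so $c$ is a value of the real coordinate $t_k$.
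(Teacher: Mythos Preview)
Your proof is correct and follows the same idea as the paper's: reduce the question to whether the coordinate projection $\pi_k\colon\overline{K}\to\P^1_\C$ is surjective, using that $\overline{K}$ is projective so the image is closed. The paper phrases the converse as ``not attached $\Rightarrow$ not surjective $\Rightarrow$ image finite $\Rightarrow$ frozen'' while you take the contrapositive ``not frozen $\Rightarrow$ surjective $\Rightarrow$ attached'', but this is the same argument; your version is simply more explicit about the component-by-component reading, the reality of the frozen value, and the separate R/P norm formulas.
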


\begin{proof}
If the $k$-th joint is frozen, then $t_k=c$ for some constant $c$ for all configurations
in $K$. Hence we also have $t_k=c$ for all points in $\overline{K}$, which includes all bonds.
Hence $N(m_k)\ne 0$.

Conversely, if the $k$-th joint is not attached to any bond, then the projection from $\overline{K}$
to the coordinate $t_k$ is not surjective. On the other hand, the image is a closed subvariety.
It follows that the image is finite and therefore the joint is frozen.
\end{proof}

\begin{Example} \label{ex:bondbenett}
The Zariski closure of the configuration set of the Bennett linkage in Example~\ref{ex:bennett} is
\[ \{ (t_1,t_2,t_3,t_4) \mid 3t_1t_2+1 = t_1+t_3 = t_2+t_4 = 0 \}. \]
It has 4 bonds:
\[ \beta_1 := (\ci,-\ci/3,-\ci,\ci/3),\ \beta_2 := (-\ci,\ci/3,\ci,-\ci/3), \]
\[ \beta_3 := (\ci/3,-\ci,-\ci/3,\ci),\ \beta_4 := (-\ci/3,\ci,\ci/3,-\ci). \]
The joints 1 and 3 are attached to $\beta_1$ and $\beta_2$. 
The joints 2 and 4 are attached to $\beta_3$ and $\beta_4$. 
\end{Example}

\begin{Proposition}
Let $i_1,\dots,i_k$ be a set of joints forming a path in the link graph such that starting point
and ending point are in the same subset of the partition induced by the bond $\beta = (t_1,\ldots,t_e)$. If at least one
of the joints is attached to the bond $\beta$, then $m_{i_1}(t_{i_1})\cdots m_{i_k}(t_{i_k})=0$.
\end{Proposition}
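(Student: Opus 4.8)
The plan is to lean on two properties of the norm: its multiplicativity, $N(ab)=N(a)N(b)$, and the fact that a dual quaternion which is proportional to an \emph{invertible} one vanishes as soon as its own norm does. The subtlety to be navigated is that $N(h)=0$ does not by itself force $h=0$ in $\D\H$ (for example $N(\eps\qi)=0$ although $\eps\qi\neq0$), so the conclusion cannot come from multiplicativity of the norm alone. The hypothesis that the two ends of the path lie in the \emph{same} block of the partition is exactly what will furnish the required invertible element, and extracting it is the step I expect to carry the real weight.

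Write $m_P:=m_{i_1}(t_{i_1})\cdots m_{i_k}(t_{i_k})$. First I would use the partition hypothesis: since the starting link and the ending link of the path lie in one connected component of the graph obtained by deleting all joints attached to $\beta$, there is a second path $Q$ joining them whose joints $j_1,\dots,j_\ell$ are all \emph{unattached}. Setting $m_Q:=m_{j_1}(t_{j_1})\cdots m_{j_\ell}(t_{j_\ell})$, each factor has nonzero norm, so $N(m_Q)=\prod_s N(m_{j_s})\neq0$ and $m_Q$ is invertible.

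Next I would show that $m_P$ and $m_Q$ are proportional at the bond. On $K$ relative positions compose consistently, so two paths between the same pair of links represent the same element of $G$; equivalently the $2\times2$ minors of the $2\times8$ matrix with rows $m_P$ and $m_Q$ all vanish. These are polynomial identities in $t_1,\dots,t_e$, hence they continue to hold on the Zariski closure $\overline K$, and in particular at $\beta$. Since $m_Q\neq0$ there, the rank-one condition gives $m_P=\lambda\,m_Q$ for a single scalar $\lambda\in\C$; note that the proportionality factor in $\P^7$ is an honest complex number, carrying no $\eps$-part.

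Finally I would apply the norm to $m_P=\lambda m_Q$, obtaining $N(m_P)=\lambda^2N(m_Q)$. Because at least one joint of $P$ is attached to $\beta$, one factor of $N(m_P)=\prod_s N(m_{i_s})$ is zero, so $N(m_P)=0$; together with $N(m_Q)\neq0$ this gives $\lambda^2=0$, and as $\lambda\in\C$ we get $\lambda=0$ and hence $m_P=\lambda m_Q=0$, as claimed. The one place that needs care is precisely the interplay already flagged: it is the common-block hypothesis, through the invertible comparison element $m_Q$ and the fact that $\lambda$ is a genuine complex scalar, that turns ``$N(m_P)=0$'' into ``$m_P=0$''; dropping it would make the statement false.
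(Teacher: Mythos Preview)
Your proof is correct and follows essentially the same approach as the paper: build an alternative path $Q$ through unattached joints, use that $[m_P]=[m_Q]$ holds on $K$ and hence on $\overline{K}$, and conclude from the norm discrepancy that $m_P=0$. Your version is more explicit about how the projective identity passes to the Zariski closure (via the vanishing of the $2\times 2$ minors) and about why the proportionality factor lies in $\C$ rather than $\D$, whereas the paper compresses this step into the single phrase ``the left side is not defined, because the product is zero''.
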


\begin{proof}
Since the starting and ending point are in the same subset of the partition, there exists a path $(j_1,\dots,j_l)$ of joints
not attached to the bond $\beta$, with the same starting and ending point. The closure equation
\[ [m_{i_1}\cdots m_{i_k}] = [m_{j_1}\cdots m_{j_l}]  \]
is valid for all configurations, but it is not valid for $\beta$ because the left side has norm zero and the right
side has norm different from zero. Since $\beta$ is in the closure of the configuration set, there is only one possibility:
the left side is not defined, because the product is zero.
\end{proof}

\begin{Example} \label{ex:bennettbc}
In our running example of the Bennett linkage (Example~\ref{ex:bennett} and Example~\ref{ex:bondbenett}), 
the following equations
and their conjugate counterparts obtained by replacing $\ci$ by $-\ci$ are valid:
\[ (\ci-\qi)g_1(-\ci/3-\qi)g_2(-\ci-\qi)g_3 = 0 , \]
\[ (\ci-\qi)g_2(\ci/3-\qi)g_3(-\ci-\qi)g_4 = 0 , \]
\[ (\ci-\qi)g_3(-\ci/3-\qi)g_4(-\ci-\qi)g_1 = 0 , \]
\[ (\ci-\qi)g_4(\ci/3-\qi)g_1(-\ci-\qi)g_2 = 0. \]
\end{Example}

Assume that we have a minimal chain $i_1,\dots,i_k$ such that $m_{i_1}(t_{i_1})\cdots m_{i_k}(t_{i_k})=0$
for some fixed bond $(t_1,\dots,t_e)$. Then $N(m_{i_1}(t_{i_1}))=N(m_{i_k}(t_{i_k}))=0$ - otherwise we could
multiply by $\overline{m}_{i_1}(t_{i_1})$ from the left or by $\overline{m}_{i_k}(t_{i_k})$ from the right and make the chain
shorter. The condition
\begin{equation} \label{eq:bond}
 m_{i_1}(t_{i_1})\cdots m_{i_k}(t_{i_k})=0,\ N(m_{i_1}(t_{i_1}))=N(m_{i_k}(t_{i_k}))=0 
\end{equation}
is called {\em bond condition}. The validity of a bond condition for some chain in the link graph
has some interesting geometric consequences on the geometric parameters of the linkage.

\begin{Lemma} \label{lem:bcon1}
\begin{enumerate}
\item If joint 1 is of type R and joint 2 is of type R or P, then the bond condition 1-2 is never valid.
\item If joint 1 is of type P and joints 2 and 3 of type R, and the bond condition 1-2-3 is valid,
	then the axes of joints 2 and 3 are parallel.
\end{enumerate}
\end{Lemma}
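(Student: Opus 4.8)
The plan is to turn every instance of a bond condition into an identity in the complexified (dual) quaternions and then evaluate it in the $2\times2$ complex-matrix model of $\C\otimes\H$, extended to dual-number entries for $\C\otimes\D\H$; in this model the quaternion conjugate is the adjugate and the norm is the determinant of the primal part. I would first record the shape of the two kinds of factors at a norm-zero parameter. For an R-joint $N(m_k)=t_k^2+1$ vanishes exactly at $t_k=\pm\ci$, and since $h_k^2=-1$ forces the scalar part of $h_k$ to vanish, one has $\pm\ci-h_k=2(\pm\ci)e_k$ with $e_k=\tfrac12(1\pm\ci h_k)$ an idempotent; because $h_k$ is a \emph{real} dual quaternion its primal part is a nonzero rank-one (Hermitian) projection, so $m_k=(\pm\ci-h_k)g_k$ has nonzero, rank-one primal part. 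For a P-joint $N(m_k)=t_k^2$ vanishes only at $t_k=0$, where $m_k=-\eps p_k g_k$ is purely dual with invertible dual part, $p_k$ and $g_k$ both being units.

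For part 1 I would split into the two sub-cases. In the R--P case one computes $m_1 m_2=(P_1+\eps Q_1)(\eps Q_2)=\eps\,P_1 Q_2$, where $P_1$ is the nonzero primal part of $m_1$ and $Q_2$ is the invertible dual part of $m_2$; hence $P_1 Q_2\neq 0$ and the bond condition 1--2 can never hold. In the R--R case I would pass to a Denavit/Hartenberg frame in which both axes are represented by $\qi$ while $g_1$ carries the twist $w_1-\qk$ together with the offsets $s_1,d_1$, cancel the invertible rightmost factor $g_2$, and reduce the condition to $(t_1-\qi)g_1(t_2-\qi)=0$. A direct $2\times2$ evaluation of the primal part $(t_1-\qi)(w_1-\qk)(t_2-\qi)$ forces $t_1=t_2\in\{\pm\ci\}$ and $w_1=0$, and the dual part then forces $d_1=0$; so the two axes would have to coincide, which is excluded for a reduced linkage, and 1--2 is again never a valid minimal bond chain.

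For part 2 the P-joint at the start is decisive. Since $m_1=\eps Q_1$ with $Q_1$ invertible, we get $m_1 m_2 m_3=\eps\,Q_1\cdot\mathrm{primal}(m_2 m_3)$, so the bond condition collapses to $\mathrm{primal}(m_2 m_3)=(t_2-\qi)g_2(t_3-\qi)=0$ in a frame adapted to axes 2 and 3, with no surviving dual-part condition. Evaluating this $2\times2$ product, with the primal part of $g_2$ equal to $w_2-\qk$ (respectively its $w_2=\infty$ normal form $1$), shows that vanishing forces $t_2,t_3\in\{\pm\ci\}$ and $w_2\in\{0,\infty\}$; since $w_2=\cot(\phi_2/2)$, the twist angle $\phi_2$ is $0$ or $\pi$, i.e. the axes of joints 2 and 3 are parallel.

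The main obstacle I expect is the bookkeeping in the R--R and P--R--R computations: the two axis directions are not independent but are coupled through the intervening factor $g$, so the geometric conclusion (``coincident'' respectively ``parallel'') only emerges once one commits to an invariant frame and tracks primal and dual parts simultaneously, rather than treating the directions and $g$ as free. A secondary point to handle with care is the boundary case $w=0$ in part 1: the computation shows the R--R bond condition does become satisfiable precisely when the axes coincide, so ``never valid'' must be understood relative to the standing assumption that consecutive axes are distinct.
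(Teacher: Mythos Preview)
Your proposal is correct and follows essentially the same route as the paper: reduce to a Denavit/Hartenberg frame and compute the relevant products $(\ci-\qi)g(\ci-\qi)$ directly, reading off the vanishing conditions on $w$ and $d$. The paper's proof of the R--R subcase is exactly this computation (it assumes both bond coordinates equal $\ci$ after reorienting, which is your $t_1=t_2$ observation), its R--P subcase uses the same invertibility-of-$p$ argument you give, and for part~(2) it simply says the calculation is analogous.

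Your framing adds two small conceptual clarifications the paper leaves implicit. First, in the P--R--R case you make explicit that multiplication by the purely dual factor $m_1=\eps Q_1$ annihilates the dual part of $m_2m_3$, so only the \emph{primal} equation $(t_2-\qi)(w_2-\qk)(t_3-\qi)=0$ survives; this is precisely why one obtains ``parallel'' ($w_2\in\{0,\infty\}$) rather than ``coincident'' ($w_2=d_2=0$). Second, you correctly flag that the R--R computation does not literally show the bond condition is \emph{never} satisfiable, but rather that it forces $w=d=0$, i.e.\ coinciding consecutive axes---a degenerate situation tacitly excluded throughout. The $2\times 2$ matrix model you invoke is harmless but not really used: once you pass to the DH frame, the computation is the same symbolic quaternion expansion the paper carries out.
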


\begin{proof}
(1): Assume that joints 1 and 2 are of type R, and the axis have distance $d$ and twist angle $2\mathrm{arccot}(w)$.
Assume, without loss of generality, that the bond coordinates at joints 1 and 2 are both $\ci$ and not $-\ci$
(this can always be achieved by a change of orientation of the axes). With 
\[ g= \left(w-\qk\right)\left(1-\frac{d}{2}\eps\qk\right), \]
the bond condition reduces to the equation
\[ (\ci-\qi)g(\ci-\qi)= (2w\ci-\eps d\ci)(\ci-\qi) = 0 ,\]
hence $w=d=0$ and the axes are equal.

Assume now that joint 2 is of type P. Then the bond coordinate at the second joint is $0$, and the bond condition
has the form $(t_{1}-\qi)\eps p=0$ for some quaternion $p$ specifying the direction of the P-joint. Since $p$ is invertible,
it follows that $t_1-\qi=0$, which is impossible.

The statement (2) reduces to a similar short and straightforward calculations.
\end{proof}

For any chain $c:=(i_1,\dots,i_k)$, the {\em coupling space} $L_c$ is defined as the linear subspace 
of $\D\H$ generated by all products $m_{i_1}(t_{i_1})\cdots m_{i_k}(t_{i_k})$, where $t_{i_1},\dots,t_{i_k}$ 
range over the full parameter space. If all joints are of type R or P, then $L_c$ has a 
generating set of cardinality $2^k$ which can be obtained by expanding the product and taking
all coefficients with respect to $t_{i_1},\dots,t_{i_k}$. The projectivization of the coupling
spaces contains the {\em coupling varieties}, consisting of all relative positions of the two links
that are connected by the chain.

\begin{Lemma} \label{lem:bcon2}
Let $c:=(i_1,\dots,i_k)$ be a chain of joints.
\begin{enumerate}
\item If the joint $i_1$ or the joint $i_k$ is of type R, then $\dim(L_c)$ is even.
\item If $\dim(L_c)=2$, then all joints are of the same type and have the same axis (for R-joints)
	resp. directions (for P-joints).
\item If all joints are of type $R$ and $\dim(L_c)=4$, then all axes are parallel or pass through a
	common point.
\item If $k=3$ and there is a bond such that the bond condition for $c$ is valid, then $\dim(L_c)<8$.
\end{enumerate}
\end{Lemma}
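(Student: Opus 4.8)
The plan is to prove each of the four claims about the coupling space $L_c$ using the explicit structure of the parametric dual quaternions $m_{i_j}(t_{i_j})$ together with norm-multiplicativity and the bond conditions from Lemma~\ref{lem:bcon1}.

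\paragraph{Part (1).} The key observation is that each factor $m_k(t_k)$ for an R-joint is \emph{affine linear} in $t_k$ with leading coefficient $g_k$ (for $(t_k-\qi)g_k$) and the constant coefficient $-\qi g_k$, so that the product, expanded in the parameters, has a ``top'' coefficient and a ``bottom'' coefficient that differ by left-multiplication by a unit. More precisely, I would exploit that $m_{i_1}(t_{i_1})=(t_{i_1}-\qi)g_{i_1}$ means the whole product $P(t)=m_{i_1}\cdots m_{i_k}$ can be written as $(t_{i_1}-\qi)\,Q(t)$ for a dual quaternion $Q$, and similarly one can factor a final R-joint on the right. Setting $t_{i_1}=\qi$ formally (working in $\D\H\otimes\C$), the element $\qi$ is idempotent-like in the sense that $(\qi-\qi)=0$ kills the top part; the correct statement is that the map $t_{i_1}\mapsto(t_{i_1}-\qi)$ takes values in a \emph{complex} line whose two ends (the coefficient of $t_{i_1}$, namely $g_{i_1}$, and the constant $-\qi g_{i_1}$) are related by the nilpotent-free unit $\qi$. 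I would show that $L_c$ is invariant under left multiplication by $\qi$ (when $i_1$ is of type R) because $\qi\cdot g_{i_1}$ and $\qi\cdot(-\qi g_{i_1})=g_{i_1}$ are again in the span; this gives $L_c$ the structure of a module over $\C=\R[\qi]$, forcing $\dim_\R(L_c)$ to be even. The symmetric argument handles the case where $i_k$ is of type R.

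\paragraph{Parts (2) and (3).} Here I would argue \emph{by induction on $k$} using the recursive description $L_c=\overline{m_{i_1}(t_{i_1})\cdot L_{c'}}$ where $c'=(i_2,\dots,i_k)$, together with the fact that appending a nontrivial R- or P-factor generically increases the dimension by a factor of two unless the new axis/direction is already ``absorbed'' by the existing coupling space. For $\dim(L_c)=2$ (Part~2), the span is a single $\C$-line, and I would show that multiplying by a genuinely new factor would force dimension $4$; hence every factor must fix the same line, and by part (1)'s idempotent analysis this pins all axes (for R) or directions (for P) to be identical. For $\dim(L_c)=4$ (Part~3) with all R-joints, I expect $L_c$ to be a two-dimensional $\C$-subspace, and I would identify the geometric condition ``parallel or concurrent axes'' with the algebraic condition that the rotation subgroups generate (the projectivization of) a $3$-dimensional group — either the planar motions (parallel) or the spherical group $SO(3)$ (concurrent) — both of whose linear spans in $\D\H$ are $4$-dimensional.

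\paragraph{Part (4).} For $k=3$, the crude bound is $\dim(L_c)\le 2^3=8$, and I must show the inequality is strict whenever a bond condition holds. The bond condition~\eqref{eq:bond} gives a nonzero product $m_{i_1}m_{i_2}m_{i_3}$ that \emph{vanishes} at the bond, with the outer norms zero. I would use this to produce a nontrivial linear relation among the eight coefficients of the expanded product: specialization at the bond yields one explicit vanishing $\C$-linear combination of the generators, and since the generators span $L_c$, this relation drops the dimension below $8$. The subtle point is ensuring the relation is genuinely nontrivial over $\C$, i.e.\ that the bond does not lie in the common zero locus of all coefficients trivially; this follows because the bond has well-defined finite coordinates at the inner joint and zero norms only at the ends, so at least one coefficient survives.

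\paragraph{Main obstacle.} The hardest part will be Part~(3): the geometric dichotomy ``parallel or concurrent'' is not a single algebraic condition but a union of two components, so I cannot simply equate $\dim(L_c)=4$ with one closed condition. I expect to handle this by analyzing the Lie-algebra span of the three rotation axes inside the (six-dimensional) dual-quaternion model of $\mathfrak{se}(3)$: a $4$-dimensional coupling space corresponds to the three one-parameter rotation subgroups lying in a common $3$-dimensional subgroup, and the classification of $3$-dimensional subgroups of $\SE$ containing at least two rotations yields exactly the planar and spherical cases. Making this correspondence between the linear span $L_c\subset\D\H$ and the subgroup structure precise — in particular relating ``dimension of the span of products'' to ``dimension of the generated subgroup'' — is where the real work lies.
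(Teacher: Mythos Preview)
Your arguments for parts (1) and (4) are correct and coincide with the paper's own proof: for (1) the paper simply observes that $L_c$ is closed under left multiplication by $h_{i_1}$, a linear map squaring to $-\mathrm{id}$, so $L_c$ is a $\C$-vector space and hence even-dimensional (your route to this via the two generators $g_{i_1}$ and $-h_{i_1}g_{i_1}$ is the same idea, just stated more circuitously; the ``idempotent-like'' and ``nilpotent-free'' language is unnecessary); for (4) the paper, like you, expands the bond condition to obtain a nontrivial linear relation among the $2^3$ generators. Your worry about nontriviality in (4) is easily dispatched: the coefficient attached to the generator coming from the empty subset of indices is the constant $1$, so the relation can never be trivial.

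For parts (2) and (3) the paper gives no argument at all, deferring to \cite{hss2}, so there is nothing to compare against. Your inductive sketch for (2) and your plan for (3) via the classification of three-dimensional subgroups of $\SE$ containing rotations are both reasonable strategies; you are right that the delicate step in (3) is translating the linear condition $\dim(L_c)=4$ into the statement that the three one-parameter rotation groups lie in a common three-dimensional subgroup.
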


\begin{proof}
This is \cite[Theorem 1]{hss2}. The proofs of (1) and (4) do give some insight, so we include them here.

(1): assume that $i_1$ is an R-joint. Then $L_c$ is closed under multiplication by $h_1$ from the left.
Since left multiplication by $h_1$ is a linear map whose square is negative identity, it follows that $L_c$
may be considered as a vector space over $\C$. Its real dimension is two times its dimension over $\C$,
which proves the claim.

(4): Expanding the bond condition, we obtain a nontrivial linear equation between the products generating
$L_c$. Hence these products cannot be linearly independent.
\end{proof}

Assume that we have a chain 1-2-3 of three joints of type R. Then we say that the chain satisfies the
{\em Bennett condition} if $\dim(L_c)=6$. The following proposition expresses the condition
in terms of the Denavit/Hartenberg parameters. The proof is straightforward.

\begin{Lemma} \label{lem:bcon3}
Let $d_1,d_2,w_1,w_2,s_2$ be the distances, angles, and offset of a 3-chain of R-joints.
Then the Bennett condition is equivalent to
\[ s_2=0,\ \frac{2d_1w_1}{w_1^2+1}=\frac{2d_2w_2}{w_2^2+1} \]
(compare with Equation~\eqref{eq:1}).
\end{Lemma}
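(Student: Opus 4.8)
The plan is to compute $\dim(L_c)$ for the chain $c=(1,2,3)$ directly and read off when it equals $6$, using the complex structure of Lemma~\ref{lem:bcon2}(1) to halve the bookkeeping. Expanding the relevant product $(t_1-\qi)g_1(t_2-\qi)g_2(t_3-\qi)g_3$ in $t_1,t_2,t_3$, the coupling space $L_c$ is the real span of the eight coefficients $X_1g_1X_2g_2X_3$ with $X_i\in\{1,\qi\}$. First I would normalize away the parameters that cannot matter. The trailing factor $g_3$ is an invertible right factor, so $s_3,w_3,d_3$ drop out and may be discarded. Since a rotation about the first axis commutes with a translation along it, $(t_1-\qi)$ commutes with $1-\frac{s_1}{2}\eps\qi$; this factor therefore becomes an invertible \emph{left} factor of every generator and $s_1$ drops out, so that $g_1$ may be replaced by $(w_1-\qk)(1-\frac{d_1}{2}\eps\qk)$. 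This leaves exactly the parameters $d_1,w_1,s_2,w_2,d_2$ named in the statement (with $s_2$ still sitting inside $g_2$).

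Next I would invoke the complex structure. Left multiplication by $\qi$ preserves $L_c$ and squares to $-1$, so $L_c$ is a complex vector space for the scalar action $\ci\cdot v:=\qi v$, and $\dim_\R(L_c)=2\dim_\C(L_c)$. Absorbing the leading factor $X_1\in\{1,\qi\}$ into this complex scalar, $L_c$ becomes the $\C$-span of the four dual quaternions $g_1g_2,\ g_1g_2\qi,\ g_1\qi g_2,\ g_1\qi g_2\qi$. In the complex basis $(1,\qj,\eps,\eps\qj)$ of $\D\H$ (one checks $\qi=\ci\cdot1$, $\qk=\ci\cdot\qj$, $\eps\qi=\ci\cdot\eps$, $\eps\qk=\ci\cdot\eps\qj$) these four vectors are the columns of a $4\times4$ complex matrix $M$ whose entries are explicit polynomials in $d_1,w_1,s_2,w_2,d_2$. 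The Bennett condition $\dim_\R(L_c)=6$ is then precisely $\dim_\C(L_c)=3$, i.e. $\det M=0$ together with the nonvanishing of some $3\times3$ minor.

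The remaining work is the evaluation of $\det M$. Since $\det M$ is a single \emph{complex} number, the equation $\det M=0$ unpacks into two real polynomial equations in $d_1,w_1,s_2,w_2,d_2$, which is exactly why the conclusion is a conjunction of two conditions. I expect $\det M$ to factor, up to an everywhere-nonzero unit, so that its real and imaginary parts become proportional to $s_2$ and to $\frac{2d_1w_1}{w_1^2+1}-\frac{2d_2w_2}{w_2^2+1}$; recall $\frac{2dw}{w^2+1}=d\sin\phi$, so the second factor is the classical Bennett relation $d_1\sin\phi_1=d_2\sin\phi_2$. Carrying out this factorization and confirming that the two real components are exactly the asserted equations is the main (and essentially the only) obstacle; it is a finite calculation that I would verify with computer algebra. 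The one extra point is to check that on the locus $\{s_2=0,\ d_1\sin\phi_1=d_2\sin\phi_2\}$ the rank of $M$ is \emph{exactly} $3$ and does not collapse further: a drop to $\dim_\R(L_c)\in\{2,4\}$ would, by Lemma~\ref{lem:bcon2}(2)--(3), force coinciding, parallel, or concurrent axes, which is the lower-dimensional sublocus $w_1,w_2\in\{0,\infty\}$ (and the like) that I would exclude or treat separately, so that the stated equivalence describes precisely the stratum $\dim(L_c)=6$.
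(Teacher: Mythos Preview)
Your plan is correct and is a reasonable way to carry out what the paper dismisses in one line (``The proof is straightforward''), so there is no published argument to compare against. Your two normalizations---dropping the invertible right factor $g_3$ and commuting the $s_1$-translation past $(t_1-\qi)$ to drop it on the left---are exactly right, and the reduction via the left-$\qi$ complex structure from an $8\times 8$ real rank condition to a $4\times 4$ complex determinant is a clean shortcut that the paper does not make explicit. Two small remarks: first, you could equally well use the \emph{right}-$\qi$ action coming from $X_3$, and in fact $L_c$ is a bimodule for both; using either one suffices. Second, your caveat about the rank possibly dropping below $3$ is well placed: as stated, the lemma's ``equivalence'' tacitly presumes one is outside the degenerate loci of Lemma~\ref{lem:bcon2}(2)--(3) (coinciding, parallel, or concurrent axes), which is the standing assumption later in the paper where the dichotomy $\dim L_{i,i+1,i+2}\in\{6,8\}$ is invoked; so treating those special parameter values separately, as you propose, is the right thing to do.
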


The lemmas ~\ref{lem:bcon1}, \ref{lem:bcon2},and \ref{lem:bcon3} above give necessary conditions on the geometric
parameters of a linkage for the existence of bonds.
The following proposition, which is taken from \cite[Theorem 6]{als}, demonstrates how these lemmas
are applied to classify linkages with a given link diagram and types of joints.



\begin{Proposition} \label{ex:prrrr}
Consider a mobile closed PRRRR linkage. Then one of the following conditions must be satisfied.
\begin{enumerate}
\item The P-joint is frozen.
\item Two of the axes of the rotational joints coincide.
\item Three of the rotational axes are parallel, and the fourth axes is frozen.
\item The axes of joints 2 and 3 are parallel, and the axes of joints 4 and 5 are parallel.
\end{enumerate}
\end{Proposition}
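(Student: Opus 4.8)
The plan is to exploit the existence of a bond forced by mobility and then to read off geometric conditions from the minimal zero-chains running through the loop, using Lemmas~\ref{lem:bcon1}--\ref{lem:bcon3}. Throughout, index the joints cyclically so that joint~$1$ is the P-joint and joints~$2,3,4,5$ are the R-joints, with joint~$r$ joining links $L_r$ and $L_{r+1}$ (indices mod $5$). If the P-joint is frozen we are in case~(1), so assume it is not. By Proposition~\ref{thm:frozen} joint~$1$ is then attached to some bond $\beta$, and because it is a P-joint this means $t_1=0$ and $m_1=-\eps p_1 g_1$ at $\beta$. Since the closure product $m_1m_2m_3m_4m_5$ lies in $\R^\ast$ on the configuration set but has norm zero at $\beta$, the same degeneration argument as in the proof of the path proposition (the unlabelled proposition preceding Example~\ref{ex:bennettbc}) forces $m_1m_2m_3m_4m_5=0$ at $\beta$.

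First I would pin down the lengths of the minimal zero-chains. A length-$2$ chain having the P-joint as an endpoint (type P-R or R-P) can never satisfy the bond condition, by the computation in the proof of Lemma~\ref{lem:bcon1}(1) and its mirror image: after cancelling the invertible $g$-factors one is left with $(\ci-\qi)$ multiplied by an invertible quaternion, which cannot vanish. A length-$2$ chain of two R-joints satisfying the bond condition forces their axes to coincide, which is case~(2). Hence, once cases~(1) and~(2) are excluded, the minimal zero-chain having joint~$1$ as an endpoint has length at least $3$; if it has length exactly $3$ it is of type P-R-R and Lemma~\ref{lem:bcon1}(2) forces the two flanking R-axes to be \emph{parallel} (note that the P-joint yields parallelism directly, so the ``concurrent'' alternative of Lemma~\ref{lem:bcon2}(3) does not intervene here), while longer such chains are treated by the coupling-space analysis below.

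The heart of the argument is to assemble the local parallelism conditions into the global dichotomy (3)/(4). Having extracted one parallel pair --- say $2\parallel 3$ --- I would then examine the coupling space $L_{(2,3,4,5)}$ of the full R-chain. Its dimension is even by Lemma~\ref{lem:bcon2}(1); a value of $2$ reproduces case~(2) via Lemma~\ref{lem:bcon2}(2); and the admissible remaining values, analysed through the Bennett condition of Lemma~\ref{lem:bcon3} and the bound $\dim(L_c)<8$ of Lemma~\ref{lem:bcon2}(4) for three-chains, should leave only two possibilities. Either a second pair $4\parallel 5$ also becomes parallel, giving case~(4); or the parallelism propagates so that three of the R-axes share a common direction, in which case the fourth R-joint must detach from every bond and is therefore frozen by Proposition~\ref{thm:frozen}, giving case~(3).

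The step I expect to be the main obstacle is precisely this coupling-space dichotomy: proving that for a genuinely \emph{mobile} linkage no configuration survives other than the symmetric one ($2\parallel 3$ and $4\parallel 5$) and its degeneration to three parallel axes with a frozen joint. Concretely, one must rule out both the ``concurrent'' alternative of Lemma~\ref{lem:bcon2}(3) (excluded here because the parallelism is extracted through the P-joint via Lemma~\ref{lem:bcon1}(2), which delivers parallel and not merely concurrent axes) and the intermediate case in which only a single pair of R-axes is parallel. This amounts to a careful bookkeeping of which joints are attached to each bond --- equivalently, of the link partition each bond induces --- combined with the dimension counts above and with the treatment of the longer P-R-R-$\cdots$ chains left open in the second step. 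The underlying quaternionic identities feeding this bookkeeping are the same short computations as in Lemma~\ref{lem:bcon1}; the difficulty is organizational, in assembling the local conditions into a complete and non-redundant list of cases.
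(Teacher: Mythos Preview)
Your first step---using a bond attached to the P-joint together with Lemma~\ref{lem:bcon1}(2) to force at least one parallel pair among the R-axes---matches the paper exactly. The divergence is in the second step, where you try to upgrade ``one parallel pair'' to the dichotomy (3)/(4) via a coupling-space analysis of the full R-chain $L_{(2,3,4,5)}$. This is not the paper's route, and as you yourself flag, it is where the argument stalls. Concretely, the tools you invoke do not bite on a $4$-chain: Lemma~\ref{lem:bcon2}(4) gives $\dim(L_c)<8$ only for $k=3$, and Lemma~\ref{lem:bcon3} is a Bennett condition for $3$-chains; neither constrains $L_{(2,3,4,5)}$ directly. So the ``bookkeeping'' you anticipate is not merely organizational---there is a missing idea.

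The paper closes the gap with a much shorter trick that you overlook: reduce the closure equation modulo~$\eps$. The P-joint factor $(t_1-\eps p_1)g_1$ becomes the fixed class $[g_1]$, so the P-joint drops out; and two parallel R-axes have identical primal parts, so they become \emph{coincident} mod~$\eps$. Thus if, say, $2\parallel 3$ but $4\not\parallel 5$, the reduced equation is that of a closed loop with rotational axes $h_2'=h_3'$, $h_4'$, $h_5'$ (primal parts). If these give three genuinely distinct directions, this is a 3R loop with pairwise distinct axes, which is never mobile---contradiction. The remaining possibilities (two of $h_2',h_4',h_5'$ coincide) are exactly what feeds into cases~(2) and~(3). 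This mod-$\eps$ reduction is the key step you are missing; once you see it, the coupling-space machinery of Lemmas~\ref{lem:bcon2}--\ref{lem:bcon3} is unnecessary for this proposition.
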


\begin{proof}
By Lemma~\ref{lem:bcon1}, we get that either joints 2 and 3 are parallel {\em or} joints 4 and 5 are parallel.
In order to show that actually both are necessary, we consider the closure equation modulo $\eps$.
If, say, joints 2 and 3 are parallel, and joints 4 and 5 are not, then there are three different rotation axes, since parallel
axes only differ in their dual part. Modulo $\eps$, we get a closure equation of a 3R loop with three different
axes, but such a link is never movable.
\end{proof}

%

For chains of four R-joints, the bond condition does not imply geometric conditions on the four axes.
Indeed, given four generic dual quaternions $h_1,h_2,h_3,h_4$, there are two solutions of the equation
\[ (\ci-h_1)(t_2-h_2)(t_3-h_3)(\ci-h_4) = 0 \]
in the unknowns $t_2,t_3$. This gives room for two bonds with $t_1=t_4=\ci$. Varying the signs of $t_1,t_4$,
one has up to 8 choices for the coordinates $(t_1,t_2,t_3,t_4)$ of a bond in such a chain.

In a 6R loop with joints 1-2-3-4-5-6-1, one can get a geometric condition by comparing the possible solutions of bond coordinates
in the chains 1-2-3-4 and 4-5-6-1. The idea is to take into account the equality
\[ [(t_1-h_1)(t_2-h_2)(t_3-h_3)] = [(t_6+h_6)(t_5+h_5)(t_4+h_4)] , \]
which holds for all points in the configuration set, and therefore also for the bonds in case both sides are defined.
The lengthy calculations have been done in \cite{ls} for 6R linkages given in terms of their Denavit/Hartenberg parameters
$d_1,w_1,s_1,\dots.d_6,w_6,s_6$. For $i=1,\dots,6$, we define $c_i:=\frac{w_i^2-1}{w_i^2+1}$ and $b_i:=\frac{2d_iw_i}{w_i^2+1}$.
Then we define the {\em quad polynomial} as the quadratic polynomial in a variable $x$
\[ Q_1^+(x) = \left(x+\frac{b_{3}c_3-b_{1}c_1}{2}-\frac{s_{1}}{2}\ci\right)^2 + \]
 \[ \frac{\ci}{2}\left(b_1 s_{2}+b_{3} s_{3}+s_{2} b_{3} c_{2}+s_{3} b_1 c_{2}\right) -\]
 \[ \frac{b_1 b_{3} c_{2}-s_{2} s_{3} c_{2}}{2}
   + \frac{s_{2}^2+s_{3}^2-b_1^2+b_{2}^2-b_{3}^2-b_{2}^2 c_{2}^2}{4}. \]
For $i=2,\dots,6$, we define the quad polynomial $Q_i^+(x)$ by a cyclic shift of indices that
shifts $1$ to $i$. Finally, we define $Q_i^-(x)$ by replacing the parameters $c_1,\dots,c_6,b_1,\dots,b_6$
and $s_2,s_4,s_6$ by their negatives, and leaving $s_1,s_3,s_5$ as they are. For instance,
\[ Q_1^-(x) = \left(x+\frac{b_{3}c_3-b_{1}c_1}{2}-\frac{s_{1}}{2}\ci\right)^2 + \]
 \[ \frac{\ci}{2}\left(b_1 s_{2}-b_{3} s_{3}-s_{2} b_{3} c_{2}+s_{3} b_1 c_{2}\right) -\]
 \[ \frac{-b_1 b_{3} c_{2}-s_{2} s_{3} c_{2}}{2}
   + \frac{s_{2}^2+s_{3}^2-b_1^2+b_{2}^2-b_{3}^2-b_{2}^2 c_{2}^2}{4}. \]

\begin{Theorem} \label{thm:q}
Let $k$ be the number of bond connections of $1$ and $4$. Then
\[ k \le \deg(\gcd(Q_1^+,Q_4^+)) + \deg(\gcd(Q_1^-,Q_4^-)) . \]
\end{Theorem}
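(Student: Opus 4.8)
The plan is to follow the strategy already indicated before the theorem: read off the relative position $P$ of link~$4$ with respect to link~$1$ from both arcs of the loop and compare. A bond connection of $1$ and $4$ is a bond $\beta=(t_1,\dots,t_6)$ for which this relative position is defined and degenerate, i.e.\ $P$ lies on the null cone $Y$; the identity $[(t_1-h_1)(t_2-h_2)(t_3-h_3)]=[(t_6+h_6)(t_5+h_5)(t_4+h_4)]$, which holds on $\overline{K}$ wherever both sides are defined, then represents the single point $P$ once through the chain $1$-$2$-$3$ and once through the chain $4$-$5$-$6$. Since $\beta$ is a bond, each arc carries at least one attached joint, so each of the two products is a genuine (nonzero) point of $Y$, and the two representations of $P$ must agree.

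First I would fix the chain $1$-$2$-$3$ and describe which points of $Y$ it can represent at a bond. Expanding $(t_1-h_1)(t_2-h_2)(t_3-h_3)$ in the Denavit/Hartenberg form of Example~\ref{ex:6rinhom}, imposing that an end coordinate be attached (so that the product lands on $Y$, cf.\ the bond condition~\eqref{eq:bond}) and eliminating the interior coordinate, I expect the locus of admissible null-cone positions to be a conic that, in a suitable affine coordinate $x$ adapted to $P$, is cut out by a single quadratic equation. The two possible signs of the attached bond coordinate give the two quadratics $Q_1^+(x)=0$ and $Q_1^-(x)=0$ recorded before the theorem; the passage from $Q_1^+$ to $Q_1^-$ is exactly the orientation-reversal bookkeeping $b_i\mapsto -b_i$, $c_i\mapsto -c_i$ of the Remark following Example~\ref{ex:6rinhom}. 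Applying the cyclic shift $1\mapsto 4$ built into the definition of the quad polynomials, the same computation for the chain $4$-$5$-$6$ yields $Q_4^+(x)=0$ and $Q_4^-(x)=0$ in the \emph{same} coordinate $x$, because both arcs measure the one position $P$.

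Next I would match the two sides. The point $P$ fixes a single value of $x$, which must therefore be a root of one of $Q_1^\pm$ and simultaneously of one of $Q_4^\pm$. The remaining step is to show that the sign classes are correlated, so that only the pairs $(Q_1^+,Q_4^+)$ and $(Q_1^-,Q_4^-)$ occur and never the mixed pairs; this I expect to follow from the fact that reversing the orientation of an axis flips the sign conventions on both arcs coherently, so that a global sign choice propagates consistently around the loop. Granting this, every bond connection of $1$ and $4$ produces a value $x$ that is a common root of $Q_1^+$ and $Q_4^+$, or of $Q_1^-$ and $Q_4^-$; conversely $x$ together with its sign class determines $P$ and recovers $\beta$, so distinct bond connections give distinct common roots within each class. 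Counting the common roots of two polynomials by the degree of their greatest common divisor then gives $k\le\deg(\gcd(Q_1^+,Q_4^+))+\deg(\gcd(Q_1^-,Q_4^-))$.

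The main obstacle is the elimination in the second paragraph: verifying that the null-cone locus of a generic $3$-chain of revolute joints really collapses, after removing the interior parameter, to precisely the stated quadratic $Q_1^\pm$ in a coordinate shared by both arcs is a substantial symbolic computation, and it is there that the exact coefficients and the $\pm$ dictionary are pinned down; this is the content of \cite{ls}, which I would cite rather than reproduce. A secondary but genuine difficulty is the sign matching of the third paragraph together with the injectivity claim $\beta\mapsto x$: one must ensure that no bond connection is lost in a coordinate degeneration, that the mixed pairs genuinely do not arise, and that a common root is never counted for two different bonds, so that the two gcd degrees really bound $k$ from above.
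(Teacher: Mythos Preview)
Your proposal follows essentially the same route as the paper. The paper itself does not give a self-contained proof of this theorem: it records the key identity
\[
[(t_1-h_1)(t_2-h_2)(t_3-h_3)] = [(t_6+h_6)(t_5+h_5)(t_4+h_4)]
\]
in the paragraph preceding the statement and then defers the ``lengthy calculations'' that produce and match the quad polynomials to \cite{ls}. Your sketch is a faithful expansion of precisely that strategy---compute the null-cone locus of each $3$-chain, recognise it as the zero set of $Q_i^\pm$ in a shared coordinate, and match signs---and you likewise cite \cite{ls} for the symbolic elimination, so there is no substantive difference in approach. The two points you flag as difficulties (the sign correlation excluding the mixed pairs $(Q_1^+,Q_4^-)$ and the injectivity $\beta\mapsto x$ needed for the inequality to go the right way) are exactly the places where the argument is not routine and where \cite{ls} has to be invoked; the paper does not address them either.
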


One can use this to derive a necessary condition for the existence of a bond connecting 1 and 4, since two polynomials
have a nontrivial $\gcd$ if and only if their resultant is zero. The equations get simpler if one assumes that some of the $\gcd$'s
have degree~2, because then the two quad polynomials need to be equal.

Similar conditions should be possible for 6-loops with P-joints, but the equations have not yet been derived, and so
its consequences are not yet known.

\section{Bond Diagrams}

The combinatorial structure of the bonds -- how many bonds are attached to which joints -- can be visualized
in a diagram. This diagram can be used to read off the degree of coupling curves.

Consider a linkage of mobility 1 with $e$ joints. Let $K$ be its configuration set and let 
$\beta=(\beta_1,\dots,\beta_e)\in\overline{K}$ be a bond ($\overline{K}$ denotes the Zariski closure of $K$).
For any two links $i,j$, define the {\em coupling map} $f_{i,j}:\overline{K}\to\SE$ as the map that computes
the relative position of link $j$ with respect to the link $i$. 
We define the {\em local distance} $d_\beta(i,j)$ as the order of the Taylor expansion
of the analytic function $N\circ f_{i,j}\circ \lambda$ divided by~2, where $\lambda$ is a local parametrization of $\overline{K}$
around~$\beta$.

\begin{Example} \label{ex:ldist1}
In the Bennett linkage of Example~\ref{ex:bennett} whose bonds are given in Example~\ref{ex:bondbenett}, we consider the bond
$\beta_1=(\ci,-\ci/3,-\ci,\ci/3)$. A local parametrization $\lambda$ of $\overline{K}$ is 
$(t-\ci,(-t+\ci)/3,-t+\ci,(t-\ci)/3)$.
We label the links cyclically, the link attached to joints 1 and 2 is indexed by 2.
Then $f_{4,2}\circ \lambda$ is $(t-\ci-\qi)g_1((-t+\ci)/3\qi)g_2$, and its norm has Taylor expansion
\[ N((t-\ci-\qi)N(g_1)N((-t+\ci)/3-\qi)N(g_2)= -160/9\ci t+40/9 t^2 + {\cal O}(t^3), \]
which is of order 1. Hence $d_{\beta_1}(4,2)=1/2$.

For the other links, one gets $d_{\beta_1}(4,1)=d_{\beta_1}(1,4)=d_{\beta_1}(2,3)=d_{\beta_1}(3,2)=1/2$, 
and all other values of $d_{\beta_1}$ are zero.
\end{Example}

\begin{Lemma} \label{lem:ldist1}
The local distance is a pseudo-metric on the set of links, i.e. for any three links $i,j,k$ we have
\[ d_\beta(i,i)=0, \ d_\beta(i,j) = d_\beta(j,i), \] 
\[ d_\beta(i,k) \le d_\beta(i,j) + d_\beta(j,k) . \]
Moreover, the perimeter $d_\beta(i,j)+d_\beta(j,k)+d_\beta(k,i)$ of the triangle $(i,j,k)$ is even.
\end{Lemma}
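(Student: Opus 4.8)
The plan is to work with the coupling maps directly and to reduce every assertion to the multiplicativity of the norm together with the cocycle relation $f_{i,k}=f_{i,j}\cdot f_{j,k}$ satisfied by relative positions. Throughout I fix the bond $\beta$ and the local parametrization $\lambda$ with $\lambda(0)=\beta$, and for each ordered pair of links $(i,j)$ I choose a \emph{nonvanishing} analytic lift $F_{i,j}\colon U\to\D\H$ of the projective curve $f_{i,j}\circ\lambda$, i.e.\ an analytic map with $F_{i,j}(t)\neq 0$ for all $t\in U$ whose class is $f_{i,j}(\lambda(t))$. Such a lift exists because $f_{i,j}$ is a morphism on $\overline{K}$ (the products of the $m_k$ define it even at $\beta$, after clearing the common factor) and any analytic map from a one-dimensional disk to $\P^7$ lifts to a nonvanishing analytic map. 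I would then set $2\,d_\beta(i,j):=\operatorname{ord}_{t=0}N(F_{i,j})$. The first thing to check is that this is well defined: two nonvanishing lifts differ by a nonvanishing analytic scalar $u$, and since $N$ is homogeneous of degree two, $N(uF_{i,j})=u^{2}N(F_{i,j})$ has the same order; the same computation shows independence of the chosen path, because any two paths give lifts of the same projective curve. I also note that $N(F_{i,j})$ is not identically zero -- along $K$ the relative position lies in $\SE$ and has nonzero real norm -- so the order is a genuine nonnegative integer.

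Reflexivity and symmetry are then short. For $d_\beta(i,i)$ the coupling is the identity, the constant lift $1$ is nonvanishing and $N(1)=1$, so the order is $0$. For symmetry I use $f_{j,i}=f_{i,j}^{-1}$; since $h^{-1}=\overline{h}/N(h)$, the conjugate $\overline{F_{i,j}}$ is a nonvanishing lift of $f_{j,i}$, and because the norm is central one has $N(\overline{h})=\overline{h}h=h\overline{h}=N(h)$, so the two orders agree.

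For the triangle inequality I would invoke the cocycle relation: $F_{i,j}F_{j,k}$ is an analytic lift of $f_{i,k}\circ\lambda$, but it may vanish at $t=0$ through zero divisors, so let $\mu:=\operatorname{ord}_{t=0}(F_{i,j}F_{j,k})\ge 0$ and take $F_{i,k}:=t^{-\mu}F_{i,j}F_{j,k}$ as the nonvanishing lift. Using that $N$ is a semigroup homomorphism,
\[
2\,d_\beta(i,k)=\operatorname{ord}N(F_{i,k})=\operatorname{ord}\big(N(F_{i,j})N(F_{j,k})\big)-2\mu=2\,d_\beta(i,j)+2\,d_\beta(j,k)-2\mu,
\]
and dividing by $2$ gives the inequality, with the defect $\mu$ measuring how far the two lifts are from composing without extra vanishing. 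The same circle of ideas settles the parity statement: the product $F_{i,j}F_{j,k}F_{k,i}$ is a lift of $f_{i,j}f_{j,k}f_{k,i}=\mathrm{id}$, so its class is $[1]=(1:0:\dots:0)$; since the only dual quaternions projecting to that point are scalar multiples of $1$, analytic continuation from $K$ (where the relation holds exactly) forces $F_{i,j}F_{j,k}F_{k,i}=c(t)\cdot 1$ for an analytic scalar $c$. Taking norms, $N(F_{i,j})N(F_{j,k})N(F_{k,i})=c^{2}$ is a perfect square, so the sum of the three orders equals $2\operatorname{ord}(c)$; hence $d_\beta(i,j)+d_\beta(j,k)+d_\beta(k,i)=\operatorname{ord}(c)$ is an integer, that is, an even multiple of the unit $\tfrac12$ in which the local distances are measured.

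The main obstacle I anticipate is not any single computation but making the lift formalism airtight at the bond, where several $N(m_k)$ vanish: I must argue that $f_{i,j}\circ\lambda$ really is analytic into $\P^7$ at $t=0$ and admits a nonvanishing lift, and that the scalar identity $F_{i,j}F_{j,k}F_{k,i}=c\cdot 1$ persists at $t=0$ (not merely on the open dense set $K$). Both rest on the fact that $\beta\in\overline{K}$ and that the coupling/closure relations are polynomial identities valid on all of $\overline{K}$, so they continue to $t=0$ by the identity theorem; once this is granted, every remaining step is a one-line consequence of the multiplicativity and conjugation-invariance of $N$.
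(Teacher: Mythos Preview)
Your argument is correct. The paper itself does not give a proof of this lemma but defers to \cite[Theorem~3]{hss2}; your direct and self-contained route via nonvanishing analytic lifts, the cocycle relation $f_{i,k}=f_{i,j}f_{j,k}$, and multiplicativity of the norm is precisely the kind of argument that sits behind that citation, so there is no substantive difference in approach to report.

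Two minor remarks. First, your reading of ``even'' as ``an integer'' (i.e.\ an even multiple of $\tfrac12$) is the intended one, as Example~\ref{ex:ldist1} confirms: there the triangle $(4,2,1)$ has perimeter $\tfrac12+0+\tfrac12=1$. Second, the point you flag as the main obstacle --- that the identities $[F_{i,j}F_{j,k}]=f_{i,k}$ and $F_{i,j}F_{j,k}F_{k,i}\in\C\cdot 1$ persist at $t=0$ --- is indeed handled exactly as you say: these are polynomial (hence analytic) conditions that hold on the Zariski-dense set coming from $K$, so by the identity theorem they hold on the whole disk. With that granted, every remaining step is a one-line consequence of $N(ab)=N(a)N(b)$ and $N(\overline h)=N(h)$, as you note.
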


\begin{Lemma} \label{lem:ldist2}
Assume that the link graph contains a chain from $i$ to $k$ passing $j$, for which the bond condition imposed by $\beta$
is not valid. Then 
\[ d_\beta(i,k) = d_\beta(i,j) + d_\beta(j,k) . \]
\end{Lemma}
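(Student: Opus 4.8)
The plan is to compute all three local distances along one and the same chain, namely the given chain from $i$ to $k$ through $j$, and then to read off additivity from the multiplicativity of the norm. Write the given chain as the concatenation of a chain $C_1$ from $i$ to $j$ and a chain $C_2$ from $j$ to $k$, and set $C:=C_1C_2$, the full chain from $i$ to $k$. Fixing a local analytic parametrization $\lambda$ of $\overline{K}$ with $\lambda(0)=\beta$, each joint $l$ contributes an analytic factor $m_l(t_l(\tau))$, where $t_l(\tau)$ is the corresponding coordinate of $\lambda$; the three coupling maps are computed by the dual-quaternion products $P_{C_1}$, $P_{C_2}$ and $P_C=P_{C_1}P_{C_2}$. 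Because $N$ is a homomorphism of semigroups from $(\D\H,\cdot)$ to $(\D,\cdot)$, we have the exact identity
\[ N(P_C\circ\lambda) = N(P_{C_1}\circ\lambda)\cdot N(P_{C_2}\circ\lambda), \]
so the orders of vanishing at $\tau=0$ satisfy $\operatorname{ord}N(P_C\circ\lambda)=\operatorname{ord}N(P_{C_1}\circ\lambda)+\operatorname{ord}N(P_{C_2}\circ\lambda)$ unconditionally.

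First I would relate these orders to the local distances. For a chain $c$ let $o_c:=\operatorname{ord}_\tau P_c(\lambda(\tau))$ be the order to which the product itself vanishes (the minimum over the orders of its eight coordinates), and factor $P_c\circ\lambda=\tau^{o_c}\widetilde P_c$ with $\widetilde P_c(0)\ne0$. The curve $\tau\mapsto[\widetilde P_c(\tau)]$ is the faithful (minimal) lift of $f\circ\lambda$ to $\D\H$, so by the definition of local distance $d_\beta=\tfrac12\operatorname{ord}N(\widetilde P_c)$. Since $N$ is homogeneous of degree two, $N(P_c\circ\lambda)=\tau^{2o_c}N(\widetilde P_c)$, whence $\tfrac12\operatorname{ord}N(P_c\circ\lambda)=d_\beta+o_c$. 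Applying this to $C_1$, $C_2$ and $C$ and inserting the order identity above gives
\[ d_\beta(i,k)=d_\beta(i,j)+d_\beta(j,k)+(o_{C_1}+o_{C_2}-o_C). \]
Because $P_C=P_{C_1}P_{C_2}$ one always has $o_C\ge o_{C_1}+o_{C_2}$, so the correction term is $\le0$; this recovers the triangle inequality of Lemma~\ref{lem:ldist1}, and equality is equivalent to $o_C=o_{C_1}+o_{C_2}$, i.e. to $\widetilde P_{C_1}(0)\widetilde P_{C_2}(0)\ne0$.

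It then remains to exploit the hypothesis to kill the correction term. The bond condition~\eqref{eq:bond} for the given chain asserts $P_C(\beta)=0$; its failure therefore means $P_C(\beta)\ne0$, that is $o_C=0$. Since a product of dual quaternions can be nonzero only if both factors are nonzero, $P_C(\beta)\ne0$ forces $P_{C_1}(\beta)\ne0$ and $P_{C_2}(\beta)\ne0$, hence $o_{C_1}=o_{C_2}=0$ as well; the correction term vanishes and additivity follows. I would also record the harmless point that each factor $N(m_l(t_l(\tau)))$ equals $t_l(\tau)^2+1$ or $t_l(\tau)^2$ and is not identically zero — a joint with identically vanishing norm would be attached to every bond yet frozen, contradicting Proposition~\ref{thm:frozen} — so all orders above are finite.

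The main obstacle I anticipate is conceptual rather than computational: $\D\H$ has zero divisors, so two partial products that are individually nonzero at $\beta$ could still multiply to zero, which would make $o_C>o_{C_1}+o_{C_2}$ and break additivity. The role of the hypothesis is precisely to forbid this cancellation; translating ``the bond condition is not valid'' into the single clean statement $P_C(\beta)\ne0$, and checking that this indeed propagates to the two subchains, is the step that must be argued with care. Everything else is bookkeeping with orders of vanishing, together with the homomorphism and homogeneity properties of $N$.
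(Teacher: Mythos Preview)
Your argument is correct. The paper itself does not give a self-contained proof here; it simply refers to \cite[Theorem~3]{hss2} and remarks that the extension to prismatic joints is routine. Your approach—computing all three local distances along the single given chain, using the multiplicativity of $N$ to get exact additivity of the orders of $N(P_c\circ\lambda)$, and then isolating the defect $o_{C_1}+o_{C_2}-o_C$ coming from zero divisors—is precisely the mechanism underlying that cited result, so you have in effect unpacked the reference rather than taken a different route. Your identification of the crucial step is right: the hypothesis ``bond condition not valid'' should be read as $P_C(\beta)\ne 0$ (the endpoint norm conditions in \eqref{eq:bond} are auxiliary, always satisfied once one passes to a minimal vanishing subchain), and from $P_C(\beta)\ne 0$ the vanishing of the correction term is immediate since a nonzero product in a ring forces both factors to be nonzero. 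The remark about finiteness of the orders is a nice piece of hygiene.
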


\begin{proof}
Both Lemmas are consequences of \cite[Theorem 3]{hss2}. The necessary adaptions to include prismatic joints are easy.
\end{proof}

The vast majority of bonds we studied so far have a very simple local distance: the link graph is partitioned into two
subsets; $d_\beta(i,j)=\frac{1}{2}$ if $i$ and $j$ lie in different subsets, and $d_\beta(i,j)=0$ otherwise
(see e.g. Example~\ref{ex:ldist1} above). We visualize linkages with these bonds by adding to the link diagram
additional lines connecting the edges, one for each conjugated pair of bonds, 
that separate the vertices in the same way. Figure~\ref{fig:bbbg}
shows the bond diagrams of the Bennett 4R linkage and 
of the Goldberg 5R linkage.

\begin{figure}
\begin{center}
\includegraphics{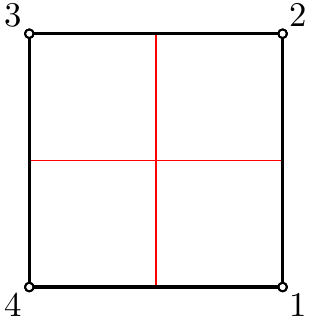}\quad\quad
\includegraphics{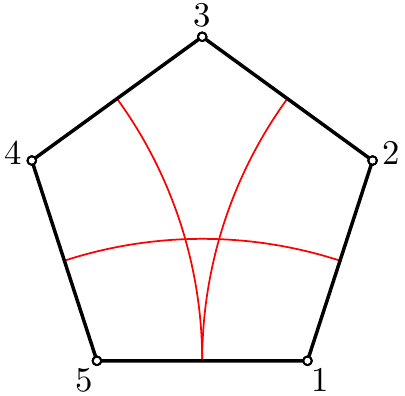}
\end{center}
\caption{The bond diagrams of the Bennett 4R linkage and of the Goldberg 5R linkage. 
	As explained in Exercise~\ref{ex:gbb}, one can ``read off'' the degree of its various
	coupler curves, be counting the number of lines that cross a line connecting to vertices corresponding
	to links. 
	For instance, the coupling curves $C_{14}$ in the Goldberg linkage is a cubic.}
\label{fig:bbbg}
\end{figure}

For any two links $i,j$, the algebraic degree of the coupling curve $C_{i,j}$ is defined as the number of all points 
$p\in\overline{K}$ such that $f_{i,j}(p)$ lies in a fixed generic hyperplane of $\P^7$. If $f_{i,j}:\overline{K}\to C_{i,j}$
is birational, then this is simply the degree of $C_{i,j}$ as a curve in $\P^7$. In general, it is equal to the degree
of $C_{i,j}$ multiplied with the mapping degree of $f_{i,j}$, i.e. the number of preimages of a generic point of $C_{i,j}$.

\begin{Theorem} \label{thm:adeg}
For any two links, the algebraic degree of $C_{i,j}$ is equal to the sum of all local distances $d_\beta(i,j)$ over
all bonds $\beta$.
\end{Theorem}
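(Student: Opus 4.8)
The plan is to reinterpret both sides of the claimed identity as intersection numbers on the configuration curve $\overline{K}$ and to match them through the null cone $Y$. First I would observe that the algebraic degree of $C_{i,j}$, as defined in the paragraph before the theorem, is nothing but the degree $D:=\deg(f_{i,j}^\ast\mathcal{O}(1))$ of the pullback line bundle on $\overline{K}$: a generic hyperplane is the zero locus of a generic linear form, i.e.\ a generic section of $\mathcal{O}(1)$ on $\P^7$, and the number of points of $\overline{K}$ mapping into it is exactly $\deg(f_{i,j}^\ast\mathcal{O}(1))$. This description already absorbs the mapping degree of $f_{i,j}$, so no separate bookkeeping for it is needed. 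Writing $N_0$ for the primal part of the norm, $N_0(p+\eps q)=p\overline{p}$, this is a quadratic form on $\D\H\cong\C^8$, hence a section of $\mathcal{O}(2)$ on $\P^7$ whose zero locus is precisely $Y$.

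Next I would pull $N_0$ back along $f_{i,j}$ to obtain a section $f_{i,j}^\ast N_0$ of $f_{i,j}^\ast\mathcal{O}(2)=(f_{i,j}^\ast\mathcal{O}(1))^{\otimes 2}$, a line bundle of degree $2D$. Since the image $C_{i,j}$ lies on the Study quadric $S$, the dual part of $N\circ f_{i,j}$ vanishes identically on $\overline{K}$, so $N\circ f_{i,j}=f_{i,j}^\ast N_0$ there; and since a generic configuration is a genuine isometry lying in $S\setminus Y$, we have $C_{i,j}\not\subset Y$ and this section is not identically zero. Its zero divisor therefore has total degree $2D$. A point $\beta\in\overline{K}$ lies in this divisor precisely when $f_{i,j}(\beta)\in Y$, equivalently $N(f_{i,j}(\beta))=0$; computing $f_{i,j}$ by the chain product $m_{i_1}\cdots m_{i_r}$ joining link $i$ to link $j$ and using that the norm is a semigroup homomorphism gives $N\circ f_{i,j}=\prod_\ell N(m_{i_\ell})$, which vanishes exactly when some joint on the chain is attached to $\beta$. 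Thus the zeros of the section are exactly the bonds separating $i$ from $j$ (those for which every path from $i$ to $j$ must cross an attached joint), while bonds that leave $i$ and $j$ in the same part of the induced partition satisfy $f_{i,j}(\beta)\notin Y$ and contribute nothing. Comparing with the definition of local distance, the order of vanishing of $f_{i,j}^\ast N_0$ at $\beta$ equals $\operatorname{ord}_\beta(N\circ f_{i,j}\circ\lambda)=2\,d_\beta(i,j)$, so summing over the divisor gives $2D=\sum_\beta 2\,d_\beta(i,j)$, which is the assertion after dividing by $2$.

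The main obstacle I expect is the careful matching of multiplicities in this last step, and in particular the dependence of $N\circ f_{i,j}$ on the choice of representative of the projective map $f_{i,j}$: rescaling the representative by an analytic unit multiplies its norm by the square of that unit, so the order of vanishing is intrinsic only once one uses a representative that is regular and nonzero at $\beta$ — equivalently, once one reads $f_{i,j}^\ast N_0$ as a section of the pullback line bundle rather than as the literal chain product, which may itself vanish at $\beta$ when the bond condition holds along the whole chain $i_1,\dots,i_r$. I would therefore fix, near each bond, a local analytic lift of $f_{i,j}$ with nonzero value at $\beta$ and measure the order through it, noting that this coincides with the intrinsic order of the section; I would also pass to the normalization of $\overline{K}$ so that the local parametrization $\lambda$ is unambiguous at singular points and each branch contributes its own bond. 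With these conventions the order of $f_{i,j}^\ast N_0$ at $\beta$ is well defined and equals $2\,d_\beta(i,j)$, and the degree count closes as above.
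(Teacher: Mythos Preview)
Your proposal is correct and follows essentially the same idea as the paper's proof: replace the generic hyperplane by the null cone $Y$ (a quadric not containing $C_{i,j}$), count intersection multiplicities, and divide by~$2$. Your treatment is more detailed---phrasing things via $f_{i,j}^\ast\mathcal{O}(2)$, noting that on $S$ the norm reduces to its primal part, and being careful about projective representatives and normalization---but the underlying argument is the same one the paper sketches.
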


\begin{proof}
  For R joints, this is \cite[Theorem 4]{hss2}. The adaptions to make it work for P joints are easy to make. The idea of the proof is the following: 
  instead of computing the algebraic degree
by intersection with a hyperplane, we can also intersect with a quadric not containing any coupler curve. We take
the null cone $Y$. Then we do the counting taking multiplicities into account, and finally divide by 2. 
\end{proof}

In the bond diagram, the local distance of a bond visualized by a single line separating the link diagram into
two is $0$ for any two vertices in the same subset and it is $\frac{1}{2}$ for any two vertices in different subsets. Summing up over all bonds, and taking into account
that conjugate bonds separate the same links, we obtain that the algebraic degree of a coupling curve is equal to
the number of lines that must be crossed when one draws a line between the two vertices.

\begin{Example} \label{ex:gbb}
Figure~\ref{fig:bbbg} shows the bond diagram of Goldberg's 5R linkage. The coupling curves $C_{13}$, $C_{35}$,
and $C_{24}$ are conics (the two first actually appear in the Bennett linkage in the construction). The coupling curves 
$C_{14}$ and $C_{25}$ are cubics. The remaining 5 coupling curves
are lines, parametrizing rotations around the joint axes. The coupling map $f_{15}$ is 2:1, all other coupling
maps are birational.
\end{Example}

\begin{figure}
\begin{center}
\includegraphics{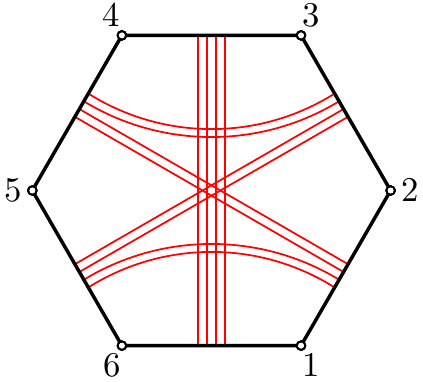} \quad\quad
\includegraphics{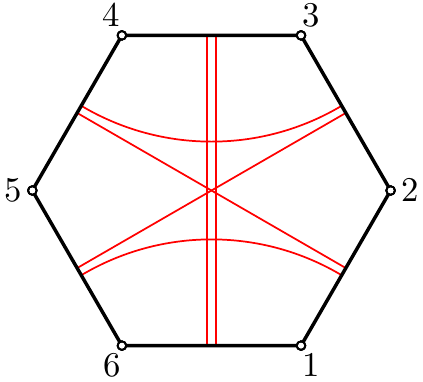}
\end{center}
\caption{The bond diagrams of Dietmaier's 6R linkage and of Wohlhart's double Goldberg linkage. The degree of the coupler curve $C_{25}$ is 8 
  in Dietmaier's linkage and 4 in Wohlhart's linkage
  (see also Example~\ref{ex:dietb} and Example~\ref{ex:wgg}).}
\label{fig:bdiet}
\end{figure}

\begin{Example} \label{ex:dietb}
The mobile 6R linkage found by Dietmaier~\cite{dietmaier} can be characterized by the following condition: the coupling
spaces of two disjoint chains of length 3 both have dimension~6, and intersect in a space of dimension~5. Then the coupling curve
with respect to the two links connected by the two chains 
is contained in the projectivization of the intersection, which is a $\P^4$. It is defined by three quadratic
equations and therefore it has degree~8.

The bond diagram is shown in Figure~\ref{fig:bdiet}. One can see the algebraic degree of $C_{25}$. Also, the Bennett
conditions need to hold for two triples of axes because of the bond connections 1-5 and 2-4.
\end{Example}

\begin{Example} \label{ex:wgg}
In \cite{wgg}, Wohlhart constructed a movable 6R linkage by combing four Bennett linkages (see Example~\ref{ex:compb}
for the construction). Its bond diagram is resembling
the bond diagram of Dietmair's linkage. The dimensions of coupling spaces are the same, and the
number of bonds attached to any pair of joints is exactly half as in Example~\ref{ex:dietb}. Consequently, the degree
of each coupler curve is exactly half of the degree of the corresponding coupler curve in Dietmaier's linkage.
\end{Example}

\begin{figure}
\begin{center}
\includegraphics{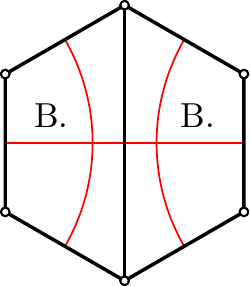}
\includegraphics{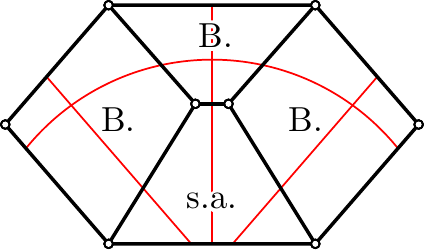}
\includegraphics{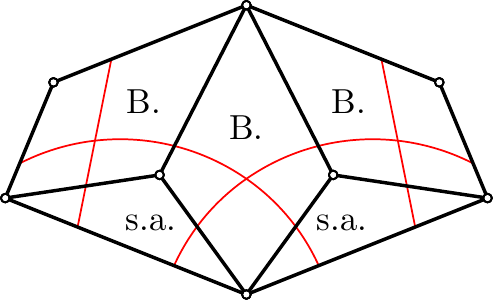}
\includegraphics{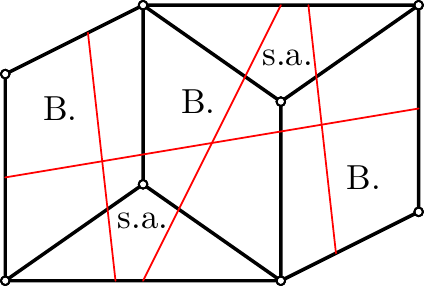}
\includegraphics{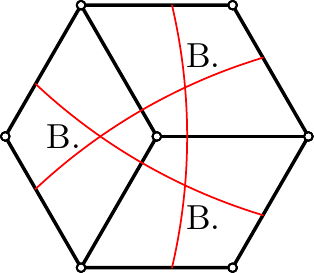}
\includegraphics{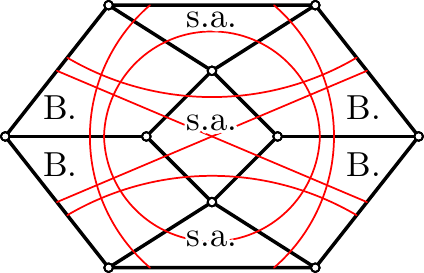}
\end{center}
\caption{Bond diagrams of multiply closed linkages consisting of Bennett 4R linkages and 3R linkages with the same axis (``s.a.'')
	in all three joints, including bond diagrams. If one leaves away the links in the interior, then one obtains
	Waldron's double Bennett linkage, the Goldberg L-form 6R linkage, Goldberg's serial 6R linkage, 
	Goldberg's second serial 6R linkage, the cube linkage, and Wohlhart's double Goldberg linkage.}
\label{fig:compb}
\end{figure}

\begin{Example} \label{ex:compb}
By combining Bennett linkages and 3R linkages such that all 3 axes coincide, one can construct various multiply closed
linkages. Leaving away the linkages with vertices drawn in the interior of the planar representation of the link graph,
we obtain the following 6R linkages: Waldron's double Bennett linkage~\cite{waldron}, the Goldberg L-form 6R linkage,
Goldberg's serial 6R linkage, Goldberg's second serial 6R linkage (all in \cite{goldberg,baker}), the cube linkage~\cite{hss1},
and Wohlhart's double Goldberg linkage~\cite{wgg}. Figure~\ref{fig:compb} shows the bond diagrams
of the multiply closed linkages. 

A complete list of all linkages with three or four conjugate pairs of bonds can be found in \cite{li}.
\end{Example}

\section{Open questions for 6R linkages}

Bond theory has helped to discover new families of 6R linkages (see \cite{hss2,sharp,ls}), sometimes containing
known families (see \cite{hlss}). The possible list of bond diagrams is finite. For some of these diagrams,
we know all linkages, for other diagrams we do have examples but no proof of completeness, and for other possible
diagrams we do not if they appear as the diagrams of any linkage. In this section, we give a summary of
the open cases.

In this overview we exclude degenerate cases where one of the joints is frozen or where two consecutive
axes coincide. Also, we exclude cases where three consecutive axes are incident to a single
point; then the three joints can be replaced by a spherical joint, and the mobile SRRR linkages are
well understood. We also exclude cases with three consecutive axes being parallel, since they can be seen
as limit cases of three consecutive axes that are incident to a single point. By Lemma~\ref{lem:bcon2},
the dimension of the coupling spaces $L_{i,i+1,i+2}$ is either 6 or 8. We label the links and joints cyclically
modulo 6, the link attached to joint 1 and 2 has index 1.

We start by restricting the possible candidates of bond diagrams of such 6R linkages. 
We may distinguish ``long bond connections'' connecting joints 1-4, 2-5, or 3-6, and ``short bond connections''
connecting $i$-$i+2$ for $i=1,\dots 6$ modulo $6$. By Lemma~\ref{lem:bcon1}, consecutive joints are not connected.

\begin{Theorem} \label{thm:bdia}
For any mobile 6R linkage, the following conditions on the bonds are known.
\begin{enumerate}
\item Any joint is attached to at most 4 conjugate pairs of bonds (counted with multiplicities).
\item If $\dim(L_{i,i+1,i+2})=\dim(L_{i+1,i+2,i+3})=6$, then there is no bond connecting joints $i$ and $i+3$.
\item If $\dim(L_{i,i+1,i+2})=\dim(L_{i+3,i+4,i+5})=6$, and the algebraic degree of $C_{i+2,i+5}$ is bigger
	than 4, then the linkage is a Dietmaier linkage (see Example~\ref{ex:dietb}).
\item If $\dim(L_{i,i+1,i+2})=6$ and $\dim(L_{i+3,i+4,i+5})=8$, and the algebraic degree of $C_{i+2,i+5}$ is bigger
	than 6, then it is 8.
\end{enumerate}
\end{Theorem}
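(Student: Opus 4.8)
The plan is to handle all four parts through one geometric object. For the pair of opposite links $i+2$ and $i+5$, the coupling curve $C_{i+2,i+5}$ can be computed along either of the two flanking $3$-chains $\{i,i+1,i+2\}$ and $\{i+3,i+4,i+5\}$, so it is contained in the projectivisation of $L_{i,i+1,i+2}\cap L_{i+3,i+4,i+5}$. By Lemma~\ref{lem:bcon2} each coupling space has dimension $6$ or $8$, and by Theorem~\ref{thm:adeg} the algebraic degree of $C_{i+2,i+5}$ equals $\sum_\beta d_\beta(i+2,i+5)$. Parts (3) and (4) will follow by comparing the degree of a curve lying in this intersection with the dimension of the intersection itself, while parts (1) and (2) are best treated by a direct analysis of the bond conditions through a fixed joint.

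For part (1) I would first reduce the assertion to a degree bound for the projection $\overline{K}\to\P^1$ onto a single joint coordinate $t_i$. A bond is attached to joint $i$ exactly when $t_i=\pm\ci$, and conjugation exchanges the two values, so the number of attached conjugate pairs counted with multiplicity is the degree of this projection, equivalently the algebraic degree of the (line) coupling curve across joint $i$, which is just its mapping degree. To bound this by $4$ I would analyse the two minimal bond chains through joint $i$, namely $\{i,i+1,i+2\}$ and $\{i,i-1,i-2\}$; length $2$ is impossible by Lemma~\ref{lem:bcon1}(1). For each such chain the admissible bond coordinate at $i$ is constrained to the roots of the pair of quadratics $Q_i^{\pm}$ appearing in Theorem~\ref{thm:q}, so only finitely many values survive. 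The delicate point, for which I expect to rely on the explicit computation of \cite{ls}, is the bookkeeping that prevents the two chains from contributing independently and caps the total at $4$.

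For part (2), suppose a long connection $i$--$(i+3)$ exists. Then the length-$4$ bond condition $m_i m_{i+1} m_{i+2} m_{i+3}=0$ holds with $N(m_i)=N(m_{i+3})=0$. Setting $w:=m_i m_{i+1} m_{i+2}\in L_{i,i+1,i+2}$, this says $w\neq 0$, $N(w)=0$, and $w$ right-annihilates the zero divisor $m_{i+3}$. I would then reduce the closure equation modulo $\eps$, exactly as in the proof of Proposition~\ref{ex:prrrr}: the two hypotheses $\dim L_{i,i+1,i+2}=\dim L_{i+1,i+2,i+3}=6$ pin down the rotation parts via the Bennett relations of Lemma~\ref{lem:bcon3}, and the reduced bond condition then forces a spherical sub-relation among the axes that is satisfiable only if two axes coincide or three become concurrent or parallel. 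All of these are excluded by the standing non-degeneracy assumptions, giving the desired contradiction.

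For parts (3) and (4) the engine is the degree--dimension comparison. When both flanking coupling spaces have dimension $6$, their intersection is generically $4$-dimensional, so $C_{i+2,i+5}$ lies in a $\P^3$ on the Study quadric and is cut out there by at least two quadrics, whence its algebraic degree is at most $4$; therefore degree $>4$ forces the intersection to jump to dimension $5$, which is precisely the defining condition of a Dietmaier linkage (Example~\ref{ex:dietb}), proving~(3). For~(4), where one space has dimension $8$, the intersection is the $6$-dimensional space $L_{i,i+1,i+2}$, so the curve lies in a $\P^5$; here I would show that the realisable algebraic degrees leave a gap between $6$ and $8$, combining the parity-type restriction of Lemma~\ref{lem:ldist1} on sums of local distances with the global bound $\deg C_{i+2,i+5}\le 8$ established in \cite{li,hlss}. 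The main obstacle throughout is the converse, rigidity-type direction: bounding the degree of a curve by the dimension of the smallest coordinate subspace containing it while correctly separating curve degree from mapping degree, and showing that any excess degree can be realised \emph{only} by enlarging the coupling-space intersection to the exact Dietmaier configuration.
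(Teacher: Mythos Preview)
For part~(1) your reduction to the mapping degree of the projection $\overline{K}\to\P^1_\C$ onto the coordinate $t_i$ is exactly what the paper does. But you then detour through the quad polynomials of Theorem~\ref{thm:q}, which bound bond connections between \emph{opposite} joints, not the total count at a single joint; it is unclear how your ``bookkeeping'' would ever yield the sharp constant~$4$. The paper's argument is much simpler and you miss it: once $t_i$ is frozen at a generic value, the remaining closure equation is that of a closed 5R linkage (non-degenerate in the sense assumed throughout this section), and such a linkage has at most $4$ configurations. That single sentence replaces your entire discussion of minimal bond chains and quad polynomials.

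For parts~(2)--(4) the paper gives no self-contained argument at all; it simply cites \cite{li} for~(2) and \cite{hlss} for~(3) and~(4). So there is no paper proof to compare against, and your outlines have to stand on their own. They do not quite. In~(3), the step ``the curve lies in a $\P^3$ on the Study quadric and is cut out there by at least two quadrics, hence has degree at most~$4$'' is unjustified: you have only one quadric (the Study quadric) available, and curves on a smooth quadric surface in $\P^3$ can have arbitrarily large degree. You still need the actual analysis from \cite{hlss} of how the coupling curve sits inside the intersection, not just a dimension count. In~(4), your appeal to the ``parity-type restriction'' of Lemma~\ref{lem:ldist1} does not exclude degree~$7$: that lemma constrains the perimeter of a single triangle of local distances for a fixed bond, not the parity of $\sum_\beta d_\beta(i{+}2,i{+}5)$, and indeed odd algebraic degrees occur elsewhere (the cubic $C_{14}$ in the Goldberg 5R of Example~\ref{ex:gbb}). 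So the gap between $6$ and $8$ needs a different mechanism, again the one supplied in \cite{hlss}.
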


\begin{proof}
(1): by Theorem~\ref{thm:adeg}, the number $k$ of pairs of bonds attached to joint $i$ is equal to the algebraic
degree of $C_{i-1,i}$. Since $C_{i-1,i}$ is a line parametrizing rotations around a fixed axis, $k$ is the
number of configurations with fixed joint parameter $t_i$. Since the maximal number of configurations of a 
non-mobile 5R linkage with no three consecutive axes incident to a single point is 4, the claim follows.

For (2), we refer to \cite[Lemma~5.6]{li}. For (3) and (4), we refer to \cite{hlss}.
\end{proof}

When the dimensions of all coupling spaces $L_{i,i+1,i+2}$ is 8, then there are no short connections. Let us assume
that the number of bonds connecting joints 1 and 4 is $k_{14}$, the number of bonds connecting joints 2 and 5 is $k_{25}$,
and the number of bonds connecting joints 3 and 6 is $k_{36}$. Without loss of generality, we may assume that
$k_{14}\le k_{25}\le k_{36}$; also, we have $k_{15}\ge 1$ by Proposition~\ref{thm:frozen} and $k_{36}\le 4$ by Theorem~\ref{thm:bdia}.
Here is a summary about what is known for these cases.

\begin{itemize}
\item All linkages with $k_{14}=k_{25}=1$ are known: there is one family \cite{hss1} with $k_{36}=1$ and another family \cite{sharp}
	with $k_{36}=2$.
	Both families are maximal, i.e. they are irreducible components of the variety ${\cal M}_6$ of movable 6R loops.
\item All linkages with $k_{25}=k_{36}=4$ are known: there is one family with $k_{14}=2$ \cite{ls} and two families
	with $k_{14}=4$ \cite{hlss}. All three families are maximal.
\item The family of line symmetric linkages has $k_{14}=k_{25}=k_{36}=2$. It is maximal. Another family with the same bond
	diagram can be found in \cite{par}. But we do not know if this family is maximal, or if there
	are other families with $k_{14}=k_{25}=k_{36}=2$.
\item There are examples with linkages with $k_{14}=k_{25}=k_{36}=3$ with reducible configuration space (see below). But we do not
	know if there are other examples.
\item For any triple $(k_{14},k_{25},k_{36})$ not covered by the above cases, we do not know if there are any linkages.
\end{itemize}

\begin{Example} \label{ex:333}
Let $h_1,h_2,h_3$ be three dual quaternions corresponding to three random lines. Set 
\[ h_4:=h_1,\ h_5:=h_2,\ h_6:=h_3 . \]
The configuration space of the closed linkage specified by the axes corresponding to $h_1,\dots,h_6$ is a reducible curve with 
four components: three lines parametrizing rotations around a coincident axis, and one curve parametrizing the motion of a line
symmetric linkage. The first three lines intersect in a common point, corresponding to the initial configuration with three
coincident lines. The fourth component does not meet the other three.
\end{Example}

The question whether a known family of linkages is maximal is more difficult when short connections are present. In this case,
maximality is known only for just one family, namely Dietmaier's linkage \cite{dietmaier}. The situation is more complicated because
one has to exclude that the family under consideration is a specialization of another family yet unknown for which the dimension 
of the coupling spaces is 8. In the case of Dietmaier's linkage, it is possible to prove maximality by 
semicontinuity of the genus of the configuration curve (see \cite{hlss}).

Let us consider only bond diagrams with at least one short connection. The thesis \cite{li} contains
a complete list of linkages with three or four conjugated pair of bonds (i.e. the bond diagram contains three or four lines; some
of these diagrams are depicted in Figure~\ref{fig:compb}). For most diagrams compatible with Theorem~\ref{thm:bdia} and 
Lemmas~\ref{lem:bcon1}, \ref{lem:bcon2}, we do not know if they are bond
diagrams of movable linkages, and for many others we have examples but we do not know if the known examples are all linkages 
with the bond diagram under consideration. There are just two diagrams with more than four lines and at least one short connection
for which all linkages are known, namely the diagram of Wohlhart's partially symmetric linkage~\cite{partsym} and its third isomerization 
shown in Figure~\ref{fig:bwbw}. Isomerization is a technique introduced in \cite{iso}
that allows to construct new families from known ones by interchanging two links; it is possible only if their affected joints
satisfy the Bennett condition.

\begin{figure}
\begin{center}
\includegraphics{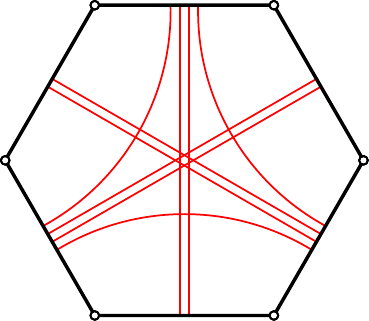}\quad\quad
\includegraphics{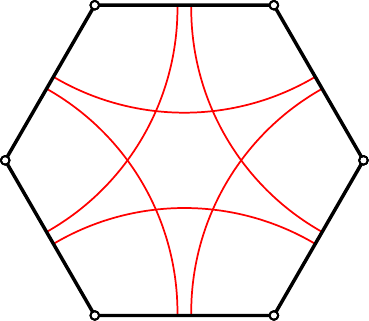}
\end{center}
\caption{The bond diagrams of the Wohlhart's partially symmetric linkage and its third isomerization.
	For a generic partially symmetric linkage, three consecutive triples of axes satisfy the Bennett condition and three do not,
	which one can be seen from the short connections in the diagram. Isomerization works only if all six triples satisfy
	the Bennett condition.}
\label{fig:bwbw}
\end{figure}

%
%

\bibliographystyle{plain}
\bibliography{mathrob}

\begin{thebibliography}{10}

\bibitem{als}
H.~Ahmadinezhad, Z.~Li, and J.~Schicho.
\newblock An algebraic study of linkages with helical joints.
\newblock {\em J. Pure and Appl. Algebra}, 219:2245--2259, 2015.

\bibitem{baker}
J.~E. Baker.
\newblock A comparative survey of the {Bennett}-based 6-revolute kinematic
  loops.
\newblock {\em Mech. Mach. Theory}, 28:83--96, 1993.

\bibitem{Hooke}
J.~E. Baker.
\newblock Displacement-closure equations of the unspecialised
  double-{H}ooke's-joint linkage.
\newblock {\em Mechanism and Machine Theory}, 37:1127--1142, 2002.

\bibitem{bennett14}
G.~T. Bennett.
\newblock The skew isogramm-mechanism.
\newblock {\em Proc. London Math. Soc.}, 13(2nd Series):151--173, 1913--1914.

\bibitem{bricard}
R.~Bricard.
\newblock {\em Le{\c{c}}ons de cin{\'{e}}matique}.
\newblock Gauthier-Villars, 1927.

\bibitem{delassus}
E.~Delassus.
\newblock The closed and deformable linkage chains with four bars.
\newblock {\em Bull. Sci. Math.}, 46:283--304, 1922.

\bibitem{Denavit}
J.~Denavit and R.~S. Hartenberg.
\newblock A kinematic notation for lower-pair mechanisms based on matrices.
\newblock {\em Trans. A.S.M.E.}, 22:215--221, 1955.

\bibitem{dietmaier}
P.~Dietmaier.
\newblock {\em {Einfach \"ubergeschlossene Mechanismen mit Drehgelenken}}.
\newblock Habilitation thesis, Graz University of Technology, 1995.

\bibitem{goldberg}
M.~Goldberg.
\newblock New five-bar and six-bar linkages in three dimensions.
\newblock {\em Trans. ASME}, 65:649--656, 1943.

\bibitem{hlss}
G.~Heged{\"u}s, Z.~Li, J.~Schicho, and H.-P. Schr\"ocker.
\newblock The theory of bonds {II}: closed {6R} linkages with maximal genus.
\newblock {\em J. Symb. Comp.}, 68:167--180, 2015.

\bibitem{hss1}
G.~Heged{\"u}s, J.~Schicho, and H.-P. Schr\"ocker.
\newblock Factorization of rational curves in the {Study} quadric and revolute
  linkages.
\newblock {\em Mech. Mach. Theory}, 69(1):142--152, 2013.

\bibitem{hss2}
G.~Heged\"us, J.~Schicho, and H.-P. Schr\"ocker.
\newblock The theory of bonds: A new method for the analysis of linkages.
\newblock {\em Mechanism and Machine Theory}, 70(0):407--424, 2013.

\bibitem{karger}
A.~Karger.
\newblock Classification of {5R} closed kinematic chains with self mobility.
\newblock {\em Mech. Mach. Th.}, pages 213--222, 1998.

\bibitem{sharp}
Z.~Li.
\newblock Sharp linkages.
\newblock In Jadran Lenar\v{c}i\v{c} and Oussama Khatib, editors, {\em Advances
  in Robot Kinematics}, pages 131--138. Springer, 2014.

\bibitem{li}
Z.~Li.
\newblock {\em Closed Linkages with Six Revolute Joints}.
\newblock Phd thesis, Johannes Kepler University of Linz, 2015.

\bibitem{par}
Z.~Li and J.~Schicho.
\newblock Three types of parallel linkages.
\newblock In F.~Thomas and A.~P. Gracia, editors, {\em Computational
  Kinematics}, pages 111--120. Springer, 2014.

\bibitem{ls}
Z.~Li and J.~Schicho.
\newblock A technique for deriving equational conditions on the
  {D}enavit-{H}artenberg parameters of a 6r linkage that are necessary for
  movability.
\newblock {\em Mech. Mach. Theory}, 94:1--8, 2015.

\bibitem{nawratil}
G.~Nawratil.
\newblock Introducing the theory of bonds for {S}tewart {G}ough platforms with
  self-motions.
\newblock {\em ASME Journal of Mechanisms and Robotics}, 6(1):011004, 2014.

\bibitem{selig05}
J.~Selig.
\newblock {\em Geometric Fundamentals of Robotics}.
\newblock Monographs in Computer Science. Springer, 2 edition, 2005.

\bibitem{waldron}
K.~J. Waldron.
\newblock Hybrid overconstrained linkages.
\newblock {\em J. Mechanisms}, 3:73--78, 1968.

\bibitem{partsym}
K.~Wohlhart.
\newblock A new {6R} space mechanism.
\newblock In {\em Proc. IFToMM 7}, pages 193--198, 1987.

\bibitem{wgg}
K.~Wohlhart.
\newblock Merging two general {Goldberg} {5R} linkages to obtain a new {6R}
  space mechanism.
\newblock {\em Mech. Mach. Theory}, 26:659--668, 1991.

\bibitem{iso}
K.~Wohlhart.
\newblock On isomeric overconstrained space mechanisms.
\newblock In {\em Proc. IFToMM 8}, volume~1, pages 153--158, 1991.

\end{thebibliography}

\end{document}